\documentclass[journal,twoside,web]{ieeecolor}

\usepackage{etoolbox}
\makeatletter
\@ifundefined{color@begingroup}%
{\let\color@begingroup\relax
\let\color@endgroup\relax}{}%
\def\fix@ieeecolor@hbox#1{%
\hbox{\color@begingroup#1\color@endgroup}}
\patchcmd\@makecaption{\hbox}{\fix@ieeecolor@hbox}{}{\FAILED}
\patchcmd\@makecaption{\hbox}{\fix@ieeecolor@hbox}{}{\FAILED}

\usepackage{generic}
\usepackage{cite}
\usepackage{amsmath,amssymb,amsfonts}
\usepackage{textcomp}

\usepackage{amsthm}
\usepackage{mathtools}	
\usepackage{grffile}	
\usepackage[tight,footnotesize]{subfigure}
\usepackage{microtype} 
\usepackage[misc]{ifsym}
\let\labelindent\relax
\usepackage{enumitem}
\usepackage{graphicx}
\usepackage{adjustbox}
\usepackage{tikz}
\usetikzlibrary{calc,intersections,arrows.meta,bending,patterns,angles,matrix}
\usepackage{pgfplots}
\usepgfplotslibrary{fillbetween}

\usepackage{hyperref}
\hypersetup{
	colorlinks=true,
	linkcolor=blue
}

\theoremstyle{remark}
\newtheorem{remarkx}{Remark}
\newenvironment{remark}
{\pushQED{\qed}\remarkx}
{\popQED\endremarkx}

\theoremstyle{definition}
\newtheorem{defn}{Definition}
\newtheorem{assump}{Assumption}

\theoremstyle{plain}
\newtheorem{theorem}{Theorem}

\newtheorem{proposition}{\indent \it Proposition}

\newtheorem*{claim*}{Claim}

\newcommand{\defeq}{:=} 

\newcommand{\dist}{{\rm dist}}
\newcommand{\matr}[1]{\begin{bmatrix} #1 \end{bmatrix}}

\newcommand{\mbr}[1][{}]{\mathbb{R}^{#1}}	

\begin{document}
\title{Vector-field guided constraint-following control for path following of uncertain mechanical systems}
\author{Hui Yin, \IEEEmembership{Member, IEEE}, Xiang Li, Yifan Liu, Weijia Yao, \IEEEmembership{Member, IEEE}
 \thanks{This work was supported by the National Natural Science Foundation of China under Grant 52475098.(\emph{Corresponding author: Weijia Yao.})}
\thanks{Hui Yin, Xiang Li are with the State Key Laboratory of Advanced Design and Manufacturing Technology for Vehicle, Hunan University, Changsha, 410082, China(e-mail: huiyin@hnu.edu.cn; lixiang2000@hnu.edu.cn).}
\thanks{Yifan Liu is with KUKA Robotics, Foshan, China, 528311, China(e-mail: lyf5400@hnu.edu.cn).}
\thanks{Weijia Yao is with the School of Artificial Intelligence and Robotics, Hunan University, Changsha, 410082, China(e-mail: wjyao@hnu.edu.cn).}
}

\maketitle

\begin{abstract}
This note proposes a general control approach, called vector-field guided constraint-following control, to solve the dynamics control problem of geometric path-following for a class of uncertain mechanical systems. More specifically, it operates at the dynamics level and can handle both fully-actuated and underactuated mechanical systems, heterogeneous (possibly fast) time-varying uncertainties with unknown bounds, and geometric desired paths that may be self-intersecting. Simulations are conducted to demonstrate the effectiveness of the approach.
\end{abstract}

\begin{IEEEkeywords}
Geometric Path-following; Vector Field Guided Constraint-following; Uncertain Mechanical Systems.
\end{IEEEkeywords}

\section{Introduction}
Accurately following a geometric path is a fundamental task for robots, which requires a robot to converge to and move along the desired path with sufficient accuracy. This task does not require any temporal information on the desired path, thereby can alleviate performance
limitations inherent to another important robot task---trajectory tracking \cite{aguiar2008performance}. 
Research on path-following algorithms has made remarkable progress in the past two decades \cite{sujit2014unmanned}. Among them, those using a guiding vector field (GVF) have shown the ability to handle complex low- and high-dimensional geometric desired paths \cite{goncalves2010vector,yao2020path,hu2024coordinated}, and to achieve higher following accuracy using the least control effort compared to some other well-known algorithms \cite{sujit2014unmanned}. 

However, currently the GVF builds its general theory based on the \emph{kinematic} model rather than the \emph{dynamic} model of a robot.  {\color{blue}The kinematic model only describes the relationship between the geometric states (e.g., displacement, velocity) of a robot and its internal actuators, without involving any forces/torques.  Therefore, a GVF-based control generally provides  velocity/direction (rather than force/torque) guidance signal inputs to the robot, with the assumption that the robot-specific inner-loop dynamics control can accurately realize these inputs  to support its effectiveness \cite{kapitanyuk2017guiding,sujit2014unmanned,phillips2004mechanics}. In other words, the current general theoretical results of GVF do not guarantee the convergence of robot dynamics to the desired path, which however is crucial for ensuring stable and accurate path-following}. The integration of the dynamic model of a robot into the GVF-based path-following control design in a general theoretical framework remains challenging, especially when the dynamic model incorporates heterogeneous uncertainties, which is the common case in robot applications.  

The constraint-following control (CFC) approach \cite{chen2009cons,chen2010AdaptiveRobustApproximatea},  which uses the servo constraint \cite{bajodah2005inve}---mathematically a class of first-order ordinary differential equations (ODEs), to describe various control tasks (e.g., stabilization, trajectory-following), is specifically developed for the dynamics control of mechanical systems. More concretely, unlike the GVF, CFC builds its general theory by the \emph{dynamic} model rather than the \emph{kinematic} model of a mechanical system. It provides force/torque inputs to the system, guaranteeing the convergence of system dynamics to the servo constraint. Furthermore, it avoids linearizations, nonlinearity cancellations, and auxiliary variables (e.g., Lagrange multiplier) or pseudo variables (e.g., generalized speed). Without initial condition deviations and uncertainties, it is optimal for costs defined by either the integral or the instantaneous weighted norm of the control \cite{udwadia2008optimal,Udwadia2015Constrained}.
However, regarding following tasks of robots, currently most CFC studies focus on {\color{blue}trajectory-tracking} rather than path-following. {\color{blue}It is unclear yet how to generalize the CFC approach to solve geometric path-following problems. This note will, for the first time, reveal that the challenge arises from the servo constraint design to encode geometric desired paths.}

Recognizing the importance of dynamics control in path-following, and noting the inherent connection between GVF and CFC, this note initiates  a general approach that solves the dynamics control problem of geometric path-following for uncertain mechanical systems. The main contributions are twofold as follows. 

\begin{itemize}
    \item {\color{blue} A vector-field guided constraint-following control (VFCFC) framework is proposed, which operates at the dynamics level, and can deal with underactuated systems and self-intersecting desired paths.} Specifically, we design a vector-field guided constraint (VFC) to encode desired paths, by discovering the compatibility between the vector-field-guided ODE in GVF and the constraint ODE in CFC. The relationship between VFC-following error and path-following error is rigorously analyzed, which theoretically supports VFC as a bridge connecting GVF and CFC that enables GVF to guide the system dynamics.
    \item {\color{blue} 
    The adaptive robust extension of the VFCFC framework for path following of mechanical systems is introduced, which accommodates a broad class of time-varying uncertainties as detailed in Remark \ref{remarkuncertainty}.} Specifically, we construct an adaptive law to estimate the unknown uncertainty bound, and then use the estimated bound for uncertainty suppression (hence robust control). 
\end{itemize}
The contributions output two classes of VFCFCs, namely the nominal VFCFC guaranteeing vanishing path-following error for the nomnial system, and the adaptive robust VFCFC guaranteeing ultimately bounded path-following error for the uncertain system. 

{\color{blue}The remainder of this note is organized as follows. Section \ref{sec_Preliminaries} reviews necessary preliminaries and formulates the problem. Section \ref{sec_doa} presents the main theoretical results. Section \ref{sec_sim} provides the simulation results. Section \ref{sec_conclusion} concludes the note.

{\it Notation}: We denote by $\mathbb{R}$ and $t$ the set of reals and the time, respectively. We use $\dot{x}(t)$ and $\ddot{x}(t)$ to respectively represent $\frac{{\rm d}}{{\rm d} t}x(t)$ and $\frac{{\rm d}^2}{{\rm d} t^2}x(t)$ for any $C^2$ function $x$ of $t$. We denote by $\dist(p_0,\mathcal{S}):=\inf\{\|p_0-p_{\rm s}\|:p_{\rm s}\in\mathcal{S}\}$ the distance between a point $p_0 \in \mathbb{R}^n$ and a nonempty set $\mathcal{S}\subseteq \mathbb{R}^n$, and denote by $\dist(\mathcal{S}_1,\mathcal{S}_2):=\inf\{\|p_{\rm s1}-p_{\rm s2}\|:p_{\rm s1}\in\mathcal{S}_1,p_{\rm s2}\in\mathcal{S}_2\}$ the distance between two nonempty sets $\mathcal{S}_1$ and $\mathcal{S}_2$. Consequently, the distance between a system trajectory $x(t):[0,\infty)\to \mathbb{R}^n$ and a desired path $\mathcal{P}\subseteq \mathbb{R}^n$ is $\dist(x(t),\mathcal{P}):=\inf\{\|x(t)-p\|:p\in\mathcal{P}\}$, i.e., the path-following error at $t$.}

\section{Preliminaries and problem formulation} \label{sec_Preliminaries}
{\color{blue}In this section, we first review some necessary preliminaries in Subsections \ref{sec21}, \ref{sec22}, and \ref{sec23}. Then, the control problem is formulated in subsection \ref{sec_problemform}, which also provides a further explanation for the motivation of our solution strategy.}
\subsection{Uncertain mechanical systems}\label{sec21}
\label{sec2}
	The dynamics of a controlled uncertain mechanical system can be generally described as
	\begin{equation}\label{mechanicalsystem}
 \setlength{\abovedisplayskip}{0pt}
\setlength{\belowdisplayskip}{0pt}
	\begin{split}
	 &M(q(t),\sigma(t),t)\ddot{q}(t)+C(q(t),\dot{q}(t),\sigma(t),t)\dot{q}(t)\\
  &+ g(q(t),\sigma(t),t)=B(q(t),\sigma(t),t)\tau(t).
	 \end{split}
	\end{equation}
	Here $q$,  $\dot{q}$,  $\ddot{q} \in \mathbb{R}^{n}$ denote the configuration, velocity, and acceleration vectors, respectively; $t\geq t_0\geq 0$ with $t_0$ the initial time; $\sigma(t)\in\Sigma \subseteq\mathbb{R}^p$ denotes the time-varying uncertainty with $\Sigma$ an unknown but compact set; $\tau \in \mathbb{R}^{m}$ ($m\leq n$) is the control {\color{blue}force/torque. The inertia matrix $M(q,\sigma,t)$ is assumed to be positive definite and thus invertible, which holds for the vast majority of practical applications \cite{chen2010AdaptiveRobustApproximatea}}; $C(q,\dot{q},\sigma,t)\dot{q}$, $g(q,\sigma,t)$ are the Coriolis/centrifugal and gravitational forces respectively, $B(q,\sigma,t)$ is the input matrix. The functions $M:\mathbb{R}^n\times\Sigma\times\mathbb{R}\to\mathbb{R}^{n\times n}$, $C:\mathbb{R}^{n}\times\mathbb{R}^{n}\times\Sigma\times\mathbb{R}\to\mathbb{R}^{n\times n}$, $g:\mathbb{R}^n\times\Sigma\times\mathbb{R}\to\mathbb{R}^{n}$, $B:\mathbb{R}^n\times\Sigma\times\mathbb{R}\to\mathbb{R}^{n\times m}$ are all sufficiently smooth such that the existence and uniqueness of solutions to \eqref{mechanicalsystem} is guaranteed for any given initial condition $(q(t_0),\dot{q}(t_0))$ \cite[Theorem 3.1]{khalil2002nonlinear}. The system \eqref{mechanicalsystem} represents underactuated mechanical systems when $m<n$. 
	
\begin{remark}\label{remarkuncertainty}
 {\color{blue}
 We do not assume the time derivative of uncertainty $\sigma$ to be bounded, hence the uncertainty can be fast time-varying. Furthermore, we do not assume the uncertainty $\sigma$ to be fully matched or to only include sufficiently small mismatched uncertainty \cite{zhang2022}. The uncertainty $\sigma$ is also not restricted to a specific part of the system, and may originate from parameters, unmodeled dynamics, or external disturbances.} For instance, parametric uncertainties can be directly described by $\sigma$, and unmodeled dynamics/external disturbances can be described by $C(q,\dot{q},\sigma,t)\dot{q}$ or $g(q,\sigma,t)$. Therefore, \eqref{mechanicalsystem} accommodates a broad class of uncertainties.
\end{remark}

\subsection{Constraint-following control}\label{sec22}
Roughly speaking, the constraint-following control (CFC) is a solution to the so-called servo constraint problem of designing control forces that drive the state of a dynamical system to move in or toward a prespecified (servo) constraint manifold, i.e., constraint-following for short. To be precise, we first give a general definition of constraint following for dynamical systems. Then we particularize this definition for mechanical system \eqref{mechanicalsystem}, {\color{blue}and subsequently gives the definition of CFC.}

\begin{defn}[Constraint-following for dynamical systems] \label{def_cf_dyn}
Consider a dynamical system $\dot{x}(t)=F(x(t),t)$, $x\in\mathbb{R}^k$, with $F$ piecewise-continuous in $t$ and Lipschitz continuous in $x$. Given the initial condition $x(t_0)\in\mathbb{R}^k$, if there exists a constraint function $\beta:\mathbb{R}^k\times\mathbb{R}_{\geq t_0}\to \mathbb{R}^z$,  such that $\|\beta(x(t),t)\| = 0$ for all $t \in \mathbb{R}_{\geq t_0}$, then the system trajectory $x(t)$ is said to \emph{consistently satisfy} the constraint $\beta(x,t)= 0$. If $\lim_{t \to \infty} \|\beta(x(t),t)\| = 0$, or, for any $\epsilon>0$, there exists a $T \in (0,\infty)$ such that $\|\beta(x(t),t)\| \le \epsilon$
	for all $t \ge T$, then $x(t)$ is said to \emph{asymptotically follow} the constraint $\beta(x,t)= 0$. If $\|\beta(x(t),t)\|$ is uniformly ultimately bounded \cite[Definition 4.6]{khalil2002nonlinear}, then $x(t)$ is said to \emph{approximately follow} the constraint $\beta(x,t)= 0$.
\end{defn}

By Definition \ref{def_cf_dyn}, if one replaces $x(t)$ with $(q(t),\dot{q}(t))$ of the mechanical system \eqref{mechanicalsystem}, and particularizes the constraint function $\beta$ to a specific form, then the definition of constraint-following for mechanical systems in this study is given below.

\begin{defn}[Constraint-following for mechanical systems]\label{def_cf_mech}
	Given the initial condition $(q(t_0), \dot{q}(t_0)) \in \mathbb{R}^{2n}$ and the control input $\tau: \mathbb{R}_{\ge t_0} \to \mathbb{R}^m$ for the mechanical system \eqref{mechanicalsystem}, if the trajectory $(q(t), \dot{q}(t))$ of \eqref{mechanicalsystem} satisfies
	\begin{equation}\label{constraint0}
  \setlength{\abovedisplayskip}{0pt}
\setlength{\belowdisplayskip}{0pt}
		\sum_{i=1}^{n}a_{li}(q,t)\dot{q}_i - c_l(q,t) = 0 ,\qquad l=1,\dots,m
	\end{equation}
	for all $t \ge t_0$, where $\dot{q}_i \in \mathbb{R}$ is the $i$-th entry of $\dot{q}$, and $a_{li}, c_{l}:\mathbb{R}^{n+1} \to \mathbb{R}$ are differentiable functions of $q$ and $t$, then the trajectory $(q(t), \dot{q}(t))$ is said to \emph{consistently satisfy} the constraint  \eqref{constraint0}. Let $\beta_l(q,\dot{q},t):=\sum_{i=1}^{n}a_{li}(q,t)\dot{q}_i - c_l(q,t)\,(l=1,\cdots,m)$, $\beta:=[\beta_1, \cdots, \beta_m]^{\top}$, if $\lim_{t \to \infty} \|\beta(q(t),\dot{q}(t),t))\| = 0$, then $(q(t), \dot{q}(t))$ is said to \emph{asymptotically follow} the constraint  \eqref{constraint0}. If  $\|\beta(q(t),\dot{q}(t),t))\|$ is uniformly ultimately bounded, then $(q(t), \dot{q}(t))$ is said to \emph{approximately follow} the constraint \eqref{constraint0}.
\end{defn}

By Definition \ref{def_cf_mech}, the definition of CFC for mechanical systems is given below.

{\color{blue}\begin{defn}[CFC for mechanical systems]\label{def_cfc}
	Given the initial condition $(q(t_0), \dot{q}(t_0)) \in \mathbb{R}^{2n}$, if the trajectory $(q(t), \dot{q}(t))$ of the mechanical system \eqref{mechanicalsystem} under a control input $\tau: \mathbb{R}_{\ge t_0} \to \mathbb{R}^m$ \emph{consistently satisfies} or \emph{asymptotically}/\emph{approximately follows} the constraint \eqref{constraint0}, then $\tau$ is a CFC for  \eqref{mechanicalsystem}.
\end{defn}}

Rearranging \eqref{constraint0} into the first-order matrix form yields:
\begin{equation}\label{matrixform1}
 \setlength{\abovedisplayskip}{0pt}
\setlength{\belowdisplayskip}{0pt}
A(q,t)\dot{q}=c(q,t),
\end{equation}
where $A=[a_{li}]_{m\times n}$ and $c=[c_1, \dots, c_m]^{\top}$. Taking the time derivative of \eqref{matrixform1}, we can obtain its second-order form  as:

\begin{equation}\label{matrixform2}
 \setlength{\abovedisplayskip}{0pt}
\setlength{\belowdisplayskip}{0pt}
A(q,t)\ddot{q}=b(q,\dot{q},t),
\end{equation}
where $b=\dot{c}-\dot{A}\dot{q}$ with $\dot{A}=[\dot{a}_{li}]_{m\times n}$ and $\dot{c}=[\dot{c}_1, \dots, \dot{c}_m]^{\top}$.

By Definition \ref{def_cf_mech} and \eqref{matrixform1}, we have
\begin{equation}\label{cf_error}
 \setlength{\abovedisplayskip}{0pt}
\setlength{\belowdisplayskip}{0pt}
 \beta(q,\dot{q},t)=A(q,t)\dot{q}-c(q,t),
\end{equation}
 which can be interpreted as the \emph{constraint-following error}. If $\beta(q(t), \dot{q}(t), t)=0$, then the current states $(q(t), \dot{q}(t))$ satisfy the constraint \eqref{matrixform1}. In view of \eqref{matrixform2}, the time derivative of \eqref{cf_error} is 
\begin{equation*}
 \setlength{\abovedisplayskip}{0pt}
\setlength{\belowdisplayskip}{0pt}
 \dot{\beta}(q,\dot{q},t)=A(q,t)\ddot{q}-b(q,\dot{q},t). 
\end{equation*}

Ref. \cite{chen2009cons} has shown that stabilization and trajectory-tracking can be encoded by the proposed constraint form. However, it is unclear yet how to use the proposed constraint form to handle the path-following problem in a general theoretical framework.

\begin{assump}\label{feasbility}
 For each $(q,\dot{q},t)\in \mathbb{R}^n\times\mathbb{R}^n\times \mathbb{R}$, there is at least one solution $\ddot{q}\in \mathbb{R}^n$ to the constraint \eqref{matrixform2}.  
\end{assump}
\begin{remark}
  Assumption \ref{feasbility} simply implies that in principle the designed constraint has to be feasible, so that the constraint-following task is possible. A necessary and sufficient condition to verify Assumption \ref{feasbility} is, the equation $A(q,t)A^+(q,t)b(q,\dot{q},t) = b(q,\dot{q},t)$ holds for all $(q,\dot{q},t)\in \mathbb{R}^n\times\mathbb{R}^n\times \mathbb{R}$, with the superscript ``$+$" denoting the Moore-Penrose (MP) generalized inverse. This is an indispensable assumption to rule out impossible constraint-following tasks.
\end{remark}

\subsection{Guiding vector field}\label{sec23}
This subsection first introduces the simple 2-D guiding vector fields (GVFs) to show the basic concept. Then the more abstract general case with higher dimensions is presented.
\subsubsection{GVF in $\mathbb{R}^2$ for non-self-intersecting desired path}
Suppose the desired path $\mathcal{P}$ is described by
\begin{equation*}
 \setlength{\abovedisplayskip}{0pt}
\setlength{\belowdisplayskip}{0pt}
	\mathcal{P} = \{ (q_1,q_2) \in \mathbb{R}^2 : \psi(q_1,q_2) = 0 \},
\end{equation*}
where $\psi: \mathbb{R}^2 \to \mathbb{R}$ is twice continuously differentiable. If $\mathcal{P}$ is non-self-intersecting, then a valid GVF $\chi:\mathbb{R}^2\to \mathbb{R}^2$ for $\mathcal{P}$ can be designed as \cite{kapitanyuk2017guiding,yao2020path}:
\begin{equation} \label{eq_gvf}
 \setlength{\abovedisplayskip}{0pt}
\setlength{\belowdisplayskip}{0pt}
	\chi(q_1,q_2) = E \nabla \psi(q_1,q_2) - k \psi(q_1,q_2) \nabla \psi(q_1,q_2),
\end{equation}
where $E=\matr{ 0 & -1 \\ 1 & 0}$, $k\in \mathbb{R}_+$ is a constant, $\nabla \psi(q_1,q_2)$ is the gradient of $\psi$. On the RHS (right-hand side) of \eqref{eq_gvf}, the first term
is ``tangential” to $\mathcal{P}$, called the
propagation term and steering a system to traverse along $\mathcal{P}$. The second term is perpendicular to the first term, called
the convergence term and steering a system to move closer to $\mathcal{P}$. 

Even under the assumption that $\mathcal{P}$ is non-self-intersecting, there may be singularity points where $\chi(q_1,q_2) =0$ in \eqref{eq_gvf} resulting in the vanishing of guiding signals. For example, when $\mathcal{P}$ is a simple closed circle centered at the origin, the origin is a singular point. Therefore, the following will introduce the singularity-free GVF for the general case, where the desired path can be either simple closed or self-intersecting \cite{yao2021singularity}.

\subsubsection{General singularity-free GVF}
Suppose a physical desired path $\mathcal{P} \subseteq \mathbb{R}^m$ that may be simple closed or self-intersecting is described by the intersection of $(m-1)$ hypersurfaces
\begin{equation} \label{mdpath}
 \setlength{\abovedisplayskip}{0pt}
\setlength{\belowdisplayskip}{0pt}
		\mathcal{P} = \{ \zeta\defeq[q_{1},\dots, q_{m}]^{\top} \in \mathbb{R}^m : \psi_i(\zeta) = 0, i=1,\cdots, m-1 \},
\end{equation}
which can be parameterized by
\begin{equation*} 
 \setlength{\abovedisplayskip}{0pt}
\setlength{\belowdisplayskip}{0pt}
	q_{1} = f_1(w), \quad \dots ,\quad q_{m}=f_m(w).
\end{equation*}
Here, $\zeta $ is the physical coordinate, $f_i\in C^2$,  $i=1,\cdots,m,$ and $w \in \mbr[]$ is the parameter; furthermore, there exists a scalar $\epsilon>0$ such that $f'_i(w):=\frac{{\rm d} f_i(w)}{{\rm d} w}$ satisfies $|f'_i(w)|\leq \epsilon $.

Let $\phi_i(q_{1},\dots,q_{m},w)\defeq q_{i}-f_i(w)$ ($i=1,\cdots,m$) and $\xi \defeq [\zeta^{\top},w]^{\top}$ with $w$ being viewed as an additional virtual coordinate. Then we can obtain a higher dimensional desired path in the $(m+1)$-dimensional $\xi$-space as
\begin{equation*}
 \setlength{\abovedisplayskip}{0pt}
\setlength{\belowdisplayskip}{0pt}
	\mathcal{P}^{hgh} = \{ \xi \in \mathbb{R}^{m+1} : \phi_i(\xi) = 0, i=1,\dots m \}.
\end{equation*}
\begin{remark}\label{markphgh}
    Intuitively, the $(m+1)$-D path $\mathcal{P}^{hgh}$  is obtained by “cutting” and “stretching” the $m$-D path $\mathcal{P}$ along the virtual $w$-axis. Inversely speaking, the path $\mathcal{P}$ is the orthogonal projection of the path $\mathcal{P}^{hgh}\subseteq\mathbb{R}^{m+1}$ in the space $\mathbb{R}^m$ with the remaining $w$-axis as zero. 
We can also obtain that, for a virtual system trajectory $\xi(t)$ living in the same $(m+1)$-D space as $\mathcal{P}^{hgh}$, its orthogonal projection in the $m$-D space where $\mathcal{P}$ lives, is the physical system trajectory $\zeta(t)$. It has been shown in \cite{yao2021singularity} that the path-following errors dist$(\xi(t),\mathcal{P}^{hgh})$ and dist$(\zeta(t),\mathcal{P})$ have the same
convergence behaviour. Therefore, we can investigate dist$(\zeta(t),\mathcal{P})$ by dist$(\xi(t),\mathcal{P}^{hgh})$. 
\end{remark}

 The gradients of $\phi_i$ ($i=1,\cdots,m$) are $\nabla \phi_i=[0,\cdots,1,\cdots,-f'_i(w)]^\top$, where ``$1$" is the $i$-th element of $\nabla \phi_i$.
Then by \cite{yao2021singularity}, the GVF for $\mathcal{P}^{hgh}$ is constructed as 
{\color{blue}\begin{equation*}
    \chi(q_{1},\dots, q_{m},w)=\times (\nabla \phi_1,\cdots,\nabla \phi_m)-\Sigma_{i=1}^{m}k_i\phi_i\nabla \phi_i,
\end{equation*}
where ``$\times$" is the generalized cross product \cite[Eq. (5)]{yao2021singularity}, and $k_i>0, i=1,\cdots,m$,  are constant gains. By direct substitution, we have}
\begin{equation}	\label{eq_vfpf_general}
 \setlength{\abovedisplayskip}{0pt}
\setlength{\belowdisplayskip}{0pt}
	\chi(\xi) = \matr{ (-1)^m f'_1(w) - k_1 \phi_1(\xi) \\  \vdots \\ (-1)^m f'_m(w) - k_m \phi_m(\xi) \\ (-1)^m + \sum_{i=1}^{m} k_i \phi_i(\xi) f'_i(w)} \in \mbr[m+1].
\end{equation}
{\color{blue} The first and second terms on the RHS of \eqref{eq_vfpf_general} are the propagation and converging terms, respectively, which are always linearly independent unless they all equal zero. Therefore, the last element of $\chi$ in \eqref{eq_vfpf_general} is always non-zero, and the GVF $\chi$ is singularity-free, i.e., $\chi(\xi)\neq 0$ for any $\xi\in \mathbb{R}^{m+1}$.} This appealing property of $\chi$ is attributed to the introduction of the additional dimension $w$, and thus the assumption of no singular points that may limit the range of admissible desired paths becomes unnecessary. Therefore, this note will adopt only the singularity-free GVF, and all the results can also be applied to the general GVF in \cite{yao2020path}.

By the GVF $\chi$, the trajectories of the following system:
\begin{equation} \label{eq_odegvf}
 \setlength{\abovedisplayskip}{0pt}
\setlength{\belowdisplayskip}{0pt}
	\dot{\xi}(t) = \chi(\xi(t))
\end{equation}
{\color{blue} globally asymptotically} converge to and move along the desired path  {\color{blue}without requiring the origin of the system to be stable \cite[Theorem 2]{yao2021singularity}, \cite[Corollary 1]{yao2023topo}, \cite{he2025nonholo}.} Therefore, if one can design a control to drive the trajectory of a mechanical system to satisfy \eqref{eq_odegvf}, then the path-following task can be completed. 

\subsection{Problem formulation} \label{sec_problemform}
\subsubsection{Problem statement}
Consider uncertain mechanical systems with dynamic model \eqref{mechanicalsystem}, the control problem is to design a state feedback control $\tau$ that drives the system trajectories to follow the desired path $\mathcal{P}\subseteq\mathbb{R}^m$ defined like \eqref{mdpath}:
\begin{equation} \label{mdpath4sys}
 \setlength{\abovedisplayskip}{0pt}
\setlength{\belowdisplayskip}{0pt}
		\mathcal{P} = \{ q^{\rm s} \in \mathbb{R}^m : \psi_i(q^{\rm s}) = 0, i=1,\cdots, m-1 \},
\end{equation}
where $q^{\rm s}\in \mathbb{R}^m$ comprises some or all entries of $q \in \mathbb{R}^n$, not necessarily being $[q_{1},\dots, q_{m}]^{\top}$.

{\color{blue}The proposed solution is motivated by capability analysis of existing CFCs and GVF-based controls, which will be further explained in the following subsection. Because the GVF $\chi$ in \eqref{eq_vfpf_general} will be used, the control $\tau$ may draw values from the domains of both physical states $(q,\dot{q})$ and extended state $w$.}

\subsubsection{{\color{blue}Motivated solution strategy}}\label{motivation} 
In the CFC approach, the constraint is specified as \eqref{matrixform1} to describe control tasks. {\color{blue} Using \eqref{matrixform1}, conventional methods reformulate desired path \eqref{mdpath4sys} into the following equivalent form}: $\dot{\psi}(q^{\rm s})+\Lambda\psi(q^{\rm s})=0 $ with $\Lambda>0$ a constant matrix. Such methods pose two issues. First, the resulting matrix $A$ includes the derivative of $\psi(q^{\rm s})$ with respect to $q^{\rm s}$. This indicates that $A$ will be state-dependent once $\psi(q^{\rm s})$ is nonlinear in $q^{\rm s}$, which may lead to the violation of Assumption \ref{feasbility} (as  illustrated by an example in the simulation section). Second, the resulting CFC can only guarantee system trajectories to follow the above constraint to approach the desired path, but cannot guarantee system trajectories to move along the desired path.

In the GVF-based control approach, given the GVF \eqref{eq_vfpf_general}, the remaining task is to design a control such that the dynamics of $\xi$ associated with a mechanical system satisfies  \eqref{eq_odegvf}. However, current GVF-based control design approaches use kinematic models of robotic systems, assuming an effective system-specific inner-loop dynamics control, and cannot be directly applied to the dynamic model \eqref{mechanicalsystem} of mechanical systems. 

 Comparing ODEs \eqref{matrixform1} and \eqref{eq_odegvf}, we can find that \eqref{eq_odegvf} can be cast into the form of \eqref{matrixform1}, with a constant (state-independent) matrix $A$. This implies a bridge between CFC and GVF. Given the complementary strengths and limitations of CFC and GVF in addressing path-following tasks, {\color{blue}bridging them suggests a promising solution to the stated control problem. The realization of this idea represents the first contribution claimed in the introduction, which is detailed in subsequent sections.} 

\section{Main results} \label{sec_doa}
\subsection{Vector-field guided constraint}
Regarding the GVF $\chi$ used in this study, i.e., \eqref{eq_vfpf_general}, $\xi$ includes a virtual position coordinate $w$, and \eqref{eq_odegvf} can be rewritten into two parts: one corresponding to $\dot{\zeta}$, and the other corresponding to $\dot{w}$. The {\color{blue}part of $\dot{\zeta}$} is
\begin{equation} \label{zetapart}
 \setlength{\abovedisplayskip}{0pt}
\setlength{\belowdisplayskip}{0pt}
	\dot{\zeta}(t) = \chi^{\rm s}(\xi(t)),
\end{equation}
where
\begin{equation*}	
 \setlength{\abovedisplayskip}{0pt}
\setlength{\belowdisplayskip}{0pt}
	\chi^{\rm s}(\xi) = \matr{ (-1)^m f'_1(w) - k_1 \phi_1(\xi) \\  \vdots \\ (-1)^m f'_m(w) - k_m \phi_m(\xi) } \in \mbr[m].
\end{equation*}
The {\color{blue}part of $\dot{w}$} is
\begin{equation}\label{wpart}
 \setlength{\abovedisplayskip}{0pt}
\setlength{\belowdisplayskip}{0pt}
    \dot{w}=(-1)^m + \sum_{i=1}^{m} k_i \phi_i(\xi) f'_i(w) \in \mbr[].
\end{equation}

As $q^{\rm s}\in \mathbb{R}^m$ consists of some (or all) entries of $q \in \mathbb{R}^n$ in \eqref{mechanicalsystem}, there exists a linear relation between $q$ and $q^{\rm s}$. Namely, there exists a constant matrix $A \in \mathbb{R}^{m \times n}$ {\color{blue}of full row rank that can be constructed in a general way}, such that $A q = q^{\rm s}$. {\color{blue}A method to construct $A$ is presented as follows. Suppose that the vector $q^{\rm s}=[q_{s_1},q_{s_2},\cdots,q_{s_m}]^\top$, with $s_1<s_2<\cdots<s_m$ and $s_i\in \{1,2,,\cdots,n\}, i=1,\cdots,m$. Note that $q=[q_1,q_2,\cdots,q_n]^\top$. Then the element at the $i$-th row and $j$-th column in the $m\times n$ matrix $A$ satisfying $Aq=q^{\rm s}$ is \begin{equation*}
a_{ij}=\begin{cases}
1, & \mbox{if   }   s_i=j, \\
0, & \mbox{otherwise}.
\end{cases}
\end{equation*}}

Let $\zeta\defeq q^{\rm s}$, then $\xi=(\zeta^{\top},w)^{\top}$, and the part of $\dot{\zeta}$ \eqref{zetapart} can be re-written as follows
\begin{equation} \label{VFC}
 \setlength{\abovedisplayskip}{0pt}
\setlength{\belowdisplayskip}{0pt}
	\dot{\zeta}(t)=A \dot{q}(t) = \chi^{\rm s}(A q(t),w(t)).
\end{equation}
 Comparing \eqref{matrixform1} and \eqref{VFC}, one immediately obtains that \eqref{VFC} can be viewed as the constraint in the CFC framework. Namely, the CFC can be used to enable the trajectory $q(t)$ of uncertain mechanical system \eqref{mechanicalsystem}  to satisfy constraint \eqref{VFC} and eventually follow the desired path encoded in \eqref{VFC}. As \eqref{VFC} is constructed based on the vector field, we call it the \emph{vector-field guided constraint (VFC)}, and the resulting CFC is called the \emph{vector-field guided constraint-following control (VFCFC)}. The second order form of \eqref{VFC} can be cast into \eqref{matrixform2} with $A$ being the same as that in \eqref{VFC} and  
\begin{equation*}
 \setlength{\abovedisplayskip}{0pt}
\setlength{\belowdisplayskip}{0pt}
\begin{split}
 b(q,\dot{q},w)&=\dot{\chi^{\rm s}}(q,w) \stackrel{\eqref{wpart}}{=} \frac{\partial \chi^{\rm s}(q,w) }{\partial q}\dot{q}\\
 &+\frac{\partial \chi^{\rm s}(q,w) }{\partial w}\left[(-1)^m + \sum_{i=1}^{m} k_i \phi_i(\xi) f'_i(w)\right]. 
 \end{split}
\end{equation*} 
\begin{remark}
 Note that the part of $\dot{w}$  \eqref{wpart} is not included in the VFC, since \eqref{wpart} is a designed virtual system which as an equation always holds for each real-time measurement of $q$. Nevertheless, \eqref{wpart} will serve as a virtual ``adjoint system” governing the evolution of $w$ to ``shape" the VFC \eqref{VFC} dynamically for the control design. Concretely, for any given initial value $w(t_0)$ and each real-time measured $q$, the control algorithm simply needs to substitute $q$ into \eqref{wpart} and {\color{blue}solve it to obtain} $w$. The resulting $w$ can then be used in the VFC \eqref{VFC} as the desired system motion being the control goal.
In a word, to fulfill the path-following task, we only need to
design a control to drive the motion of system \eqref{mechanicalsystem} to follow \eqref{VFC}
with a virtual ``adjoint system” \eqref{wpart} governing $w$, which is similar to the adaptive law governing adaptive parameters in adaptive controls \cite{ioannou2006adap}.   
\end{remark}

\begin{remark}
It is worth emphasizing that the matrix $A$ of the VFC is constant and of full row rank, which brings essential benefits to the CFC design for path-following. The most salient one is, with such a matrix $A$, the fundamental Assumption \ref{feasbility} to rule out impossible constraint-following tasks is always satisfied. While in conventional CFC design, Assumption \ref{feasbility} has to be cautiously verified. Furthermore, the matrix $A$ is generally involved in some other assumptions within the CFC framework, e.g., \cite{chen2010AdaptiveRobustApproximatea} assuming $A$ to be of full row rank. A constant $A$ of full row rank, instead of a state-dependent $A$, will greatly ease the restrictions arising from these assumptions.
\end{remark}

 \subsection{Error analysis}
In this subsection, error analysis will be implemented to assure the effectiveness of VFC for encoding
the path-following task. From the above discussion, we can find that there are three error signals of the actual system trajectory. 

The first error signal is the \emph{path-following error}: 
\[
    \dist(\xi(t),\mathcal{P}^{hgh}) = \inf\{\|\xi(t)-p\|:p\in\mathcal{P}^{hgh}\}.
\]
Rendering dist($\xi(t),\mathcal{P}^{hgh}$) to converge to $0$ or to be (ultimately) bounded as $t\to \infty$ is the ultimate control goal. However, dist($\xi(t),\mathcal{P}^{hgh}$) is generally computationally complicated. The second error signal is the \emph{path-function error}:
\[
    \|\phi(\xi)\|=\|[\phi_1(\xi),\phi_2(\xi),\cdots,\phi_{m}(\xi)]^{\top}\|,    
\]
which is more straightforward for computation.  It is introduced to ``represent'' the path-following error dist($\xi,\mathcal{P}^{hgh}$), in the sense that  $\|\phi(\xi)\|$  converging to zero implies dist($\xi,\mathcal{P}^{hgh}$) converging to zero along a trajectory $\xi(t)$, and the boundedness of $\|\phi(\xi)\|$ implies the boundedness of dist($\xi,\mathcal{P}^{hgh}$) along a trajectory $\xi(t)$. 
This is guaranteed under two mild assumptions (i.e., Assumptions \ref{assump2} and \ref{assump3}) that will be imposed later to ensure the consistency of convergence behaviors between dist($\xi,\mathcal{P}^{hgh}$) and $\|\phi(\xi)\|$, so that we can use $\|\phi(\xi)\|$ to ``represent'' the path-following error. The third error signal is the \emph{VFC-following error}:
\[
    \beta(q,\dot{q},w)=A\dot{q}-\chi^{\rm s}(Aq,w),  
\]
which can be directly used to indicate the satisfaction of VFC \eqref{VFC}. We will show in the sequel that the convergence behavior of $\|\beta\|$ aligns with that of $\|\phi\|$. Therefore, the path-following control design task can be fulfilled by {\color{blue}analyzing} $\|\beta\|$, as long as certain consistency of convergence behaviors between the path-following error dist($\xi,\mathcal{P}^{hgh}$) and the path-function error $\|\phi(\xi)\|$ is guaranteed.

The VFC-following error $\beta$ {\color{blue}indicates} the distance between $\dot{\zeta}$
and $\chi^{\rm s}$, which affects the satisfaction of \eqref{VFC}. Furthermore, as the part of $\dot{w}$ \eqref{wpart} always holds, it does not bring any errors. Consequently, we can obtain the {\color{blue}disturbed system of \eqref{eq_odegvf}} as
\begin{equation*}
 \setlength{\abovedisplayskip}{0pt}
\setlength{\belowdisplayskip}{0pt}
    \dot{\xi}(t)=\chi(\xi(t))+d(t)
\end{equation*}
with $d(t) = [\beta^{\top}(q(t),\dot{q}(t),w(t)), 0]^{\top} \in \mathbb{R}^{m+1}$ being the disturbance.
Then the governing equation of $\phi$  is
\begin{equation}\label{errordyn}
 \setlength{\abovedisplayskip}{0pt}
\setlength{\belowdisplayskip}{0pt}
    \dot{\phi}(\xi(t))=N^{\top}(\xi(t))(\chi(\xi(t))+d(t))
\end{equation}
with $N(\xi)=[\nabla \phi_1(\xi),\nabla \phi_2(\xi),\cdots,\nabla\phi_m(\xi)]\in \mathbb{R}^{(m+1)\times m}$.
{\color{blue}With the governing equation \eqref{errordyn} of path-function error $\phi$, the following subsections will demonstrate the feasibility of using the convergence behavior of VFC-following error $\|\beta\|$ to indicate that of path-following error $\text{dist}(\xi,\mathcal{P}^{hgh})$.  The demonstration is carried out considering two different cases: vanishing $d$ and non-vanishing but bounded $d$.} 
 \subsubsection{Vanishing disturbance}
\begin{assump} \label{assump2} 
    For any positive scalar constant $\epsilon$, it holds that
\[
    \inf\{ \| \phi(\xi(t)) \| : \dist(\xi(t), \mathcal{P}^{hgh}) \ge \epsilon  \} >0.
\]
\end{assump}

\begin{remark} 
Assumption \ref{assump2} implies that along any continuous trajectory $\xi(t)$, we have $\| \phi(\xi(t)) \| \to 0 \implies\dist(\xi(t), \mathcal{P}^{hgh}) \to 0$ as $t \to \infty$. Therefore, if $\|\phi\|$ vanishes along a trajectory, then the path-following error also vanishes. {\color{blue} This assumption is
crucial to exclude some pathological cases \cite{yao2021singularity}, enabling} us to use $\|\phi\|$ to indicate
the convergence behaviour of $\dist(\xi, \mathcal{P}^{hgh})$ when $\|\phi\|$
vanishes, which serves as a base for the following proposition to show that vanishing $\|\beta\|$ or $\|d\|$ leads to vanishing dist$(\xi,\mathcal{P}^{hgh})$. 
\end{remark}

\begin{proposition}\label{coro1}
Under Assumption \ref{assump2}, the path-following error dist$(\xi(t),\mathcal{P}^{hgh})$ will converge to $0$ if $\|\beta(q(t),\dot{q}(t),w(t))\|= \|d(t)\| \to  0$ as $t\to \infty$.
\end{proposition}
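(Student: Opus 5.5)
The plan is to show first that $\|d(t)\|\to 0$ forces the path-function error $\|\phi(\xi(t))\|\to 0$. I would do this by computing the error dynamics \eqref{errordyn} explicitly and using a Lyapunov/ISS argument. Assumption \ref{assump2} then transfers the convergence of $\|\phi\|$ to $\dist(\xi(t),\mathcal{P}^{hgh})$.

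\textbf{Step 1 (explicit error dynamics).} Write $\nabla\phi_i=e_i-f_i'(w)e_{m+1}$ and set $K=\mathrm{diag}(k_1,\dots,k_m)$ and $F=[f_1'(w),\dots,f_m'(w)]^\top$.
\begin{itemize}
\item The $i$-th entry of $N^\top\chi$ is $\chi_i-f_i'\chi_{m+1}$. Substituting \eqref{eq_vfpf_general}, the propagation terms $(-1)^m f_i'$ cancel, leaving $-k_i\phi_i-f_i'\sum_j k_jf_j'\phi_j$. Hence $N^\top\chi=-(I+FF^\top)K\phi$.
\item Since the last entry of $d$ is zero, $N^\top d=\beta$.
\end{itemize}
Therefore \eqref{errordyn} becomes
\begin{equation*}
\dot\phi=-(I+FF^\top)K\phi+\beta .
\end{equation*}
This is a system in $\phi$ alone, driven by $\beta$, with a time-varying but uniformly positive-definite factor $I+FF^\top\succeq I$.

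\textbf{Step 2 (ISS-type estimate).} Take $V=\tfrac12\phi^\top K\phi$. Then
\begin{equation*}
\dot V=-(K\phi)^\top(I+FF^\top)(K\phi)+(K\phi)^\top\beta\le-\|K\phi\|^2+\|K\phi\|\,\|\beta\|.
\end{equation*}
Let $\underline k=\min_i k_i$ and $\bar k=\max_i k_i$. Using $\|K\phi\|^2\ge 2\underline k V$ and $\|K\phi\|\le\sqrt{2\bar k V}$, the function $W=\sqrt V$ satisfies
\begin{equation*}
D^+W\le-\underline k\,W+\sqrt{\bar k/2}\,\|\beta\|,
\end{equation*}
where the upper Dini derivative handles the point $W=0$. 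The comparison lemma then gives
\begin{equation*}
W(t)\le e^{-\underline k(t-t_0)}W(t_0)+\sqrt{\bar k/2}\int_{t_0}^t e^{-\underline k(t-s)}\|\beta(s)\|\,ds .
\end{equation*}
A standard splitting of the integral at some $T$ shows $W(t)\to0$ when $\|\beta\|\to0$. On $[t_0,T]$ the integrand is bounded, since $\beta$ is continuous and converges, so that part is killed by the factor $e^{-\underline k(t-T)}$. On $[T,t]$ we use $\sup_{s\ge T}\|\beta(s)\|$, which is small for large $T$. Thus $\|\phi(\xi(t))\|\to0$.

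\textbf{Step 3 (Assumption \ref{assump2}).} Suppose, for contradiction, that $\dist(\xi(t),\mathcal{P}^{hgh})\not\to0$. Then there are $\epsilon>0$ and $t_k\to\infty$ with $\dist(\xi(t_k),\mathcal{P}^{hgh})\ge\epsilon$. By Assumption \ref{assump2}, $\|\phi(\xi(t_k))\|\ge\delta>0$ for all $k$, which contradicts Step 2.

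\textbf{Main obstacle.} The step needing the most care is Step 1: verifying the exact cancellation that makes the $\phi$-dynamics autonomous in $\phi$ and independent of the unbounded coordinate $w$ except through $F$. Because the term $FF^\top$ is positive semidefinite, no bound on $f'$ is needed there. A minor point is that the argument implicitly requires the closed-loop solution $(q,\dot q,w)$ to exist on $[t_0,\infty)$, which is presupposed by the hypothesis $\|\beta(t)\|\to0$.
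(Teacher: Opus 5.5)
Your proof is correct, but it takes a different route from the paper. The paper's proof is a citation. It invokes \cite[Theorem 3]{yao2020path} to assert that \eqref{errordyn} is locally input-to-state stable with input $d$, reads off $\|\phi\|\to 0$ from the resulting $\mathcal{KL}$/$\mathcal{K}$ estimate, and then applies Assumption~\ref{assump2} exactly as in your Step 3. You instead use the parametric structure $\phi_i=q_i-f_i(w)$ of the singularity-free field. Under $N^\top$ the propagation term vanishes, $N^\top N=I+FF^\top\succeq I$, and $N^\top d=\beta$. This gives $\dot\phi=-(I+FF^\top)K\phi+\beta$, from which a direct Lyapunov argument yields a global exponential ISS bound.

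What the citation buys is brevity and applicability to general GVFs, where $N^\top N$ can degenerate away from the path and only local ISS is available. What your route buys is a self-contained global statement with explicit rate $\underline{k}$, no use of the bound $|f_i'|\le\epsilon$, and an explicit treatment of the converging-input/converging-state step. That last point actually tightens the paper's argument. An estimate of the form $\eta(\|\phi(\xi(t_0))\|,t)+\varrho(\sup_{r\in[t_0,t)}\|d(r)\|)$ by itself only gives $\limsup_{t\to\infty}\|\phi\|\le\varrho(\sup_{r\ge t_0}\|d(r)\|)$. Getting $\|\phi\|\to0$ requires restarting the estimate at later times, and a merely local ISS property would also need small initial data. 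Your convolution splitting handles this directly.

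One wording correction: the $\phi$-equation is not autonomous in $\phi$ alone, since $F$ depends on $w(t)$. Your estimate only uses $I+FF^\top\succeq I$ uniformly in $w$, though, so treating $F(w(t))$ as a time-varying coefficient along the given solution is legitimate.
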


\emph{Proof.} By \cite[Theorem 3]{yao2020path}, the governing equation \eqref{errordyn} of the path-function error $\phi(\xi(t))$ is locally input-to-state stable {\color{blue}with $d(t)$ being the input}, which implies that the error satisfies $\|\phi(\xi(t))\|\le\eta(\|\phi(\xi(t_0))\|,t)+\varrho(\sup_{r\in[t_0 ,t)} \|d(r)\|)$ for a class $\mathcal{KL}$ function $\eta$ and a class $\mathcal{K}$ function $\varrho$. Since $\|d\|\equiv\|\beta\|$ and $\|\beta(q(t),\dot{q}(t),w(t))\|\to0$ as $t\to \infty$, $\|\phi(\xi(t))\|\to 0$ as $t\to \infty$. By Assumption \ref{assump2}, we have that dist$(\xi(t),\mathcal{P}^{hgh})$ converges to $0$ as $t\to \infty$.$\hfill\blacksquare$

 \subsubsection{Non-vanishing but bounded disturbance}
\begin{assump}[Radial boundedness of $\phi$] \label{assump3} 
It holds that 
$\|\phi(p)\|\to \infty$  as dist$(p,\mathcal{P}^{hgh})\to \infty$ for any point $p\in\mathbb{R}^{m+1}$.
\end{assump}

\begin{remark} Technically, Assumption \ref{assump3} implies that for any $\epsilon> 0$, there exists $\epsilon_0 > 0$, such that if dist$(\xi,\mathcal{P}^{hgh}) > \epsilon_0$,
then $\|\phi(\xi)\| > \epsilon$. In other words, we have $\{\xi : \|\phi(\xi)\| \leq
\epsilon\} \subseteq \{\xi : \mathrm{dist}(\xi, \mathcal{P}^{hgh}) \leq \epsilon_0\}$, indicating that
dist$(\xi(t), \mathcal{P}^{hgh})$ is bounded if $\|\phi(\xi(t))\|$ is bounded along any continuous trajectory $\xi(t)$. {\color{blue}This assumption holds for most practical desired paths, despite counter examples, e.g., \cite[Example 1]{yao2021dichotomy}. Therefore, we can} use $\|\phi(\xi)\|$ to investigate the convergence behaviour
of dist$(\xi, \mathcal{P}^{hgh})$ when $\|\phi(\xi)\|$ is bounded, which serves as a base for the following proposition to show that bounded $\|\beta\|$ or $\|d\|$ leads to bounded dist$(\xi,\mathcal{P}^{hgh})$.
\end{remark}

\begin{proposition}\label{coro2}
Under Assumption \ref{assump3}, the path-following error dist$(\xi(t),\mathcal{P}^{hgh})$ is uniformly ultimately bounded if $\|\beta(q(t),\dot{q}(t),w(t))\|=\|d(t)\|$ is uniformly bounded.
\end{proposition}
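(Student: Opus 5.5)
The plan mirrors the proof of Proposition \ref{coro1}, with the local input-to-state stability estimate replaced by a uniform ultimate boundedness argument on the path-function error, and Assumption \ref{assump3} in place of Assumption \ref{assump2}. The argument has three steps.

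\textbf{Step 1: dynamics of $\phi$.} I first compute the nominal part of \eqref{errordyn} explicitly. Since $\nabla\phi_i=[e_i^\top,-f_i'(w)]^\top$, the inner product of $\nabla\phi_i$ with $\chi$ in \eqref{eq_vfpf_general} gives
\[
(-1)^m f_i'-k_i\phi_i-f_i'\Big((-1)^m+\textstyle\sum_j k_j\phi_j f_j'\Big)=-k_i\phi_i-f_i'\textstyle\sum_j k_j\phi_j f_j'.
\]
Hence $N^\top\chi=-(I+F F^\top)K\phi$, where $F=[f_1',\dots,f_m']^\top$ and $K=\mathrm{diag}(k_i)$. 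Likewise, $N^\top d=\beta$, because the last entry of $d$ is zero.

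\textbf{Step 2: Lyapunov function.} I take $V=\tfrac12\phi^\top K\phi$. Then
\[
\dot V=-\phi^\top K(I+FF^\top)K\phi+\phi^\top K\beta\le -\|K\phi\|^2+\|K\phi\|\,\|\beta\|,
\]
using $I+FF^\top\succeq I$. Let $\bar\beta:=\sup_t\|\beta(t)\|<\infty$. Then $\dot V<0$ whenever $\|K\phi\|>\bar\beta$, and in particular whenever $\|\phi\|>\bar\beta/k_{\min}$. This gives a global, uniform estimate, which is needed because the local ISS of \cite[Theorem 3]{yao2020path} alone would not suffice for an arbitrary bound $\bar\beta$. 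By the standard ultimate-boundedness theorem \cite[Theorem 4.18]{khalil2002nonlinear}, with class-$\mathcal{K}_\infty$ bounds $\tfrac{k_{\min}}2\|\phi\|^2\le V\le\tfrac{k_{\max}}2\|\phi\|^2$ along the solution, the error $\|\phi(\xi(t))\|$ is uniformly ultimately bounded. Explicitly, it eventually enters and stays in the ball of radius $\mu:=\sqrt{k_{\max}/k_{\min}}\,\bar\beta/k_{\min}$ (plus any $\delta>0$). The entry time depends only on $\|\phi(\xi(t_0))\|$ and not on $t_0$.

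\textbf{Step 3: transfer to the path-following error.} By Assumption \ref{assump3}, as stated in the subsequent remark, for $\epsilon=\mu$ there is $\epsilon_0>0$ with $\{\xi:\|\phi(\xi)\|\le\mu\}\subseteq\{\xi:\dist(\xi,\mathcal{P}^{hgh})\le\epsilon_0\}$. Therefore, after the uniform entry time, $\dist(\xi(t),\mathcal{P}^{hgh})\le\epsilon_0$. That is, the path-following error is uniformly ultimately bounded.

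\textbf{Main obstacle.} The key difficulty is Step 2: obtaining a \emph{global} bound on $\phi$ that is uniform in the initial time, despite the state-dependent coupling through $f_i'(w)$. The explicit cancellation in $N^\top\chi$, giving the positive semidefinite extra term $FF^\top$, is what resolves it. If needed, the boundedness $|f_i'|\le\epsilon$ can also be used to bound $N$, but it is not required, since $N^\top d=\beta$ exactly.
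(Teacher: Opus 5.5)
Your proof is correct, but it takes a different route from the paper.

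\textbf{The paper's route.} It never analyzes the $\phi$-dynamics. It reuses the estimate $\|\phi(\xi(t))\|\le\eta(\|\phi(\xi(t_0))\|,t)+\varrho(\sup_{r\in[t_0,t)}\|d(r)\|)$ from Proposition~\ref{coro1}, which comes from the \emph{local} input-to-state stability result \cite[Theorem 3]{yao2020path}. From this it reads off an ultimate bound $\varrho(\sup\|\beta\|)$ on $\|\phi\|$, and then applies Assumption~\ref{assump3}.

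\textbf{Your route.} You compute the error dynamics explicitly: $\dot\phi=-(I+FF^\top)K\phi+\beta$. Equivalently, $N^\top\chi=-N^\top NK\phi$ with $N^\top N=I+FF^\top\succeq I$, and $N^\top d=\beta$. With $V=\tfrac12\phi^\top K\phi$ you then get a global ultimate bound with explicit constants. This bound is uniform in $t_0$, because the derivative estimate does not depend on $w$ or $t$.

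\textbf{What each buys.} Your route gives rigor for disturbances of any size. A local ISS estimate holds only for sufficiently small $\|\phi(\xi(t_0))\|$ and $\sup\|d\|$. So the paper's citation does not, strictly speaking, cover an arbitrary uniform bound on $\beta$, such as the initial-condition-dependent bound arising in Theorem~\ref{thm2}; your estimate does. The paper's route buys brevity. Its phrasing is also meant to carry over to the general GVF of \cite{yao2020path}, where $N^\top N$ can degenerate away from the path and only local statements are available. Your computation relies on the specific structure of the singularity-free field.

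\textbf{A simplification of Step 3.} In this singularity-free setting, Step 3 needs no assumption at all. For $\xi=[\zeta^\top,w]^\top$, the point $p=[f_1(w),\dots,f_m(w),w]^\top$ lies on $\mathcal{P}^{hgh}$ and $\|\xi-p\|=\|\phi(\xi)\|$. Hence $\dist(\xi,\mathcal{P}^{hgh})\le\|\phi(\xi)\|$, and Assumption~\ref{assump3} holds automatically.
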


\emph{Proof.} As $\|d\|\equiv\|\beta\|$, we have $\sup_{r\in[t_0,t)} \|d(r)\| =
\sup_{r\in[t_0,t)} \|\beta(q(r),\dot{q}(r),w(r))\|$. From the proof of Proposition \ref{coro1}, the path-function error $\|\phi(\xi(t))\|$ is uniformly ultimately bounded by
a $\mathcal{K}$-class function of $\sup_{r\in[t_0,t)} \|\beta(q(r),\dot{q}(r),w(r))\|$, i.e., $\varrho(\sup_{r\in[t_0,t)} \|\beta(q(r),\dot{q}(r),w(r))\|)$,
since the governing equation \eqref{errordyn} of $\phi(\xi(t))$ is locally input-to-state stable {\color{blue}with $d(t)$ being the input}.
Applying Assumption \ref{assump3}, we have
that dist$(\xi(t),\mathcal{P}^{hgh})$ is uniformly ultimately bounded. $\hfill\blacksquare$

From Propositions \ref{coro1} and \ref{coro2}, it follows that if the control law designed for \eqref{mechanicalsystem} ensures that the VFC-following error $\|\beta(q(t),\dot{q}(t),w(t))\|$ converges to $0$ or remains uniformly bounded as $t\to \infty$, then the path-following error dist$(\xi,\mathcal{P}^{hgh})$ exhibits consistent convergence behaviour, provided that mild assumptions are satisfied. Therefore, the path-following task associated with the desired path $\mathcal{P}$ in \eqref{mdpath4sys} can be transformed to the constraint-following task associated with the VFC \eqref{VFC}, which will be fulfilled in the following subsection.

Based on the above results, the control design for the path-following problem of the mechanical system \eqref{mechanicalsystem} proceeds in two steps. First, neglecting the uncertainty, a class of nominal path-following controls is designed. Second, considering the uncertainty, a class of adaptive robust path-following controls is designed based on the nominal control. 

\subsection{Nominal path-following control}
The nominal system of $\eqref{mechanicalsystem}$ can be described as
\begin{equation}\label{nominalsystem}
 \setlength{\abovedisplayskip}{0pt}
\setlength{\belowdisplayskip}{0pt}
	\begin{split}
	 \bar{M}(q(t),t)\ddot{q}(t)+\bar{C}(q(t),\dot{q}(t),t)\dot{q}(t)\\+\bar{g}(q(t),t)=\bar{B}(q(t),t)\tau(t).
	 \end{split}
	\end{equation}
In this study, the nominal and uncertain portions of a system matrix $(\cdot)$ are denoted by $\bar{(\cdot)}$ and $\Delta (\cdot)$  respectively, hence $(\cdot)=\bar{(\cdot)}+\Delta (\cdot)$. For example, $\Bar{M}$ represents the nominal portion of $M$, {\color{blue}which is also considered to be positive definite}. The nominal path-following control that steers the nominal system \eqref{nominalsystem} to follow $\mathcal{P}$ is designed under the following assumption.
\begin{assump}\label{assump_tau2}
   For any given constant matrix $P\in \mathbb{R}^{m \times m}$, $P>0$, there exists a positive constant $\underline{\lambda}>0$ such that $\lambda_m(PA\bar{M}^{-1}(q,t)\bar{B}(q,t) (A\bar{M}^{-1}(q,t)\bar{B}(q,t))^{\top}P) \ge \underline{\lambda}$ for all $(q,t)\in \mathbb{R}^n \times\mathbb{R}_{\geq 0}$, where $\lambda_m(\cdot)$ denotes the minimal eigenvalue of a matrix.
\end{assump}

\begin{remark}
   Note $PA\bar{M}^{-1}(q,t)\bar{B}(q,t) (A\bar{M}^{-1}(q,t)\bar{B}(q,t))^{\top}P$ is positive semi-definite for all $(q,t)\in \mathbb{R}^n \times\mathbb{R}$. Assumption \ref{assump_tau2} simply requires that the minimum eigenvalue remain bounded away from zero. Another implication of  Assumption \ref{assump_tau2} is that the matrix $A\bar{M}^{-1}(q,t)\bar{B}(q,t)$ is invertible for all $(q,t)\in \mathbb{R}^n \times\mathbb{R}$, {\color{blue}a property that does not automatically follow even when $A$ and $B$ are full rank (as the product may lose rank)}. This assumption excludes tasks where the matrix $A$ results in a non-invertible $A\bar{M}^{-1}(q,t)\bar{B}(q,t)$ for some $(q,t)\in \mathbb{R}^n\times\mathbb{R}$, as its inverse is required in the control design. A similar assumption can be found in existing CFC studies, e.g., \cite{yin2019ControllingUnderactuatedTwoWheeleda, yin2020tackling}. However, the distinction lies in the fact that Assumption \ref{assump_tau2} in this study uses a constant $A$ that is of full row rank, whereas the corresponding assumption in existing CFC studies involves an \( A \) that depends on \( (q,t) \). This difference stems from the proposed VFC, which greatly relaxes the assumption restriction.
\end{remark}

The nominal path-following control is proposed as
\begin{equation}\label{nominal}
 \setlength{\abovedisplayskip}{0pt}
\setlength{\belowdisplayskip}{0pt}
    \tau(t)=p_1(q(t),\dot{q}(t),w(t),t)+p_2(q(t),\dot{q}(t),w(t),t),
\end{equation}
with
\begin{equation}\label{nom_p1}
 \setlength{\abovedisplayskip}{0pt}
\setlength{\belowdisplayskip}{0pt}
\begin{split}
p_1(q,\dot{q},w,t)=(A\bar{M}^{-1}(q,t)\bar{B}(q,t))^{-1}[b(q,\dot{q},w)\\+A\bar{M}^{-1}(q,t)(\bar{C}(q,\dot{q},t)\dot{q}+\bar{g}(q,t))],\\
p_2(q,\dot{q},w,t)=-\kappa (A\bar{M}^{-1}(q,t)\bar{B}(q,t))^{\top}P\beta(q,\dot{q},w),
\end{split}
\end{equation}
and
$\beta=A\dot{q}-\chi^{\rm s}$, {\color{blue}$\kappa\in\mathbb{R}_+$ is a gain parameter}. 

\begin{theorem}[{\color{blue}Vanishing error for nominal systems}]
	Under Assumptions \ref{assump2} and \ref{assump_tau2}, the path-following control \eqref{nominal} drives the trajectory of the nominal mechanical system \eqref{nominalsystem} to converge to the desired path \eqref{mdpath4sys}, i.e., the path-following error satisfies that $\dist(\zeta(t), \mathcal{P})\to 0$ as $t\to\infty$.
\end{theorem}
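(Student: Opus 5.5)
We plan to show that under the control \eqref{nominal} the VFC-following error $\beta$ of the nominal system \eqref{nominalsystem} decays to zero. Proposition \ref{coro1} then gives $\dist(\xi(t),\mathcal{P}^{hgh})\to 0$, and the projection argument in Remark \ref{markphgh} gives $\dist(\zeta(t),\mathcal{P})\to 0$.

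\emph{Step 1: closed-loop error dynamics.} Since $\bar M$ is positive definite, \eqref{nominalsystem} gives $\ddot q=\bar M^{-1}(\bar B\tau-\bar C\dot q-\bar g)$. The matrix $A$ is constant and $w$ evolves by \eqref{wpart}, so $\dot\beta=A\ddot q-b(q,\dot q,w)$. Substitute $\tau=p_1+p_2$. By Assumption \ref{assump_tau2}, $A\bar M^{-1}\bar B$ is invertible, so the term $A\bar M^{-1}\bar B p_1$ equals $b+A\bar M^{-1}(\bar C\dot q+\bar g)$. It therefore cancels both $b$ and the Coriolis and gravity terms. What remains is
\begin{equation*}
\dot\beta=-\kappa\, A\bar M^{-1}\bar B\,(A\bar M^{-1}\bar B)^{\top}P\beta .
\end{equation*}

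\emph{Step 2: Lyapunov argument.} Take $V=\beta^{\top}P\beta$ with the same $P>0$ used in the control. Then
\begin{equation*}
\dot V=-2\kappa\,\beta^{\top}PA\bar M^{-1}\bar B(A\bar M^{-1}\bar B)^{\top}P\beta\le -2\kappa\underline{\lambda}\|\beta\|^2\le -\frac{2\kappa\underline{\lambda}}{\lambda_M(P)}V ,
\end{equation*}
where the first inequality uses Assumption \ref{assump_tau2}. Hence $V(t)\le V(t_0)e^{-2\kappa\underline\lambda(t-t_0)/\lambda_M(P)}$. This means $\|\beta(t)\|$, and therefore $\|d(t)\|$, converges to zero exponentially.

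\emph{Step 3: conclusion.} Proposition \ref{coro1}, under Assumption \ref{assump2}, yields $\dist(\xi(t),\mathcal{P}^{hgh})\to 0$. The statement is about the physical path, so we pass to $\zeta=Aq$, the projection of $\xi$ onto its first $m$ coordinates. If $p=(p_\zeta,p_w)\in\mathcal{P}^{hgh}$, then $p_\zeta=f(p_w)\in\mathcal{P}$. This gives $\dist(\zeta,\mathcal{P})\le\|\zeta-p_\zeta\|\le\|\xi-p\|$. Taking the infimum over $p$ yields $\dist(\zeta(t),\mathcal{P})\le\dist(\xi(t),\mathcal{P}^{hgh})\to 0$.

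\emph{Main obstacle: well-posedness of the closed loop.} We must ensure that solutions $(q,\dot q,w)$ exist on all of $[t_0,\infty)$, so the Lyapunov estimate is meaningful for all time. Otherwise finite escape could occur, for instance in $w$ or in an unconstrained $q$-direction when $m<n$. Two facts control $\xi$. First, $\beta$ is bounded. Second, $\|\phi\|$ stays bounded by the local ISS estimate used in Proposition \ref{coro1}. Together with $|f_i'|\le\epsilon$, these bound $\dot w$ and $\dot\zeta$ linearly, which rules out escape of $\xi$ in finite time. For the remaining internal states we rely on the standing smoothness assumption of Section \ref{sec21}, invoked as in \cite[Theorem 3.1]{khalil2002nonlinear}, and argue on the maximal interval of existence. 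The key observation is that $V$ decreases on any interval where the solution exists.
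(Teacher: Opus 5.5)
Your proposal is correct and follows the paper's route: exponential decay of $\beta$ under \eqref{nominal}, then Proposition~\ref{coro1} under Assumption~\ref{assump2}, then projection from $\mathcal{P}^{hgh}$ to $\mathcal{P}$. The differences are minor. The paper cites \cite[Theorem 2]{yin2019ControllingUnderactuatedTwoWheeleda} for the decay of $\beta$, which you derive directly from the cancellation and $V=\beta^{\top}P\beta$. It appeals to Remark~\ref{markphgh} where you prove $\dist(\zeta,\mathcal{P})\le\dist(\xi,\mathcal{P}^{hgh})$ explicitly. It also never discusses global existence, treating it as part of the standing smoothness hypothesis of Section~\ref{sec21}; on that point, note that \cite[Theorem 3.1]{khalil2002nonlinear} by itself only gives local existence for the unconstrained internal states.
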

\begin{proof}
	By the results in \cite[Theorem 2]{yin2019ControllingUnderactuatedTwoWheeleda}, it follows that $\beta(q(t),\dot{q}(t),w(t))=A\dot{q}(t)-\chi^{\rm s}(Aq(t),w(t))$ exponentially converges to $0$ as $t\to\infty$. Namely, the trajectory of the nominal mechanical system \eqref{nominalsystem} asymptotically follows the constraint \eqref{VFC} (c.f., Definition \ref{def_cf_mech}). Then by Proposition \ref{coro1}, since the VFC-following error converges to 0 as $t\to\infty$, the path-following error dist$(\xi(t),\mathcal{P}^{hgh})$ converges to $0$ as $t\to \infty$, and thus by Remark \ref{markphgh}, we have $\dist(\zeta(t), \mathcal{P})\to 0$ as $t\to\infty$. 
\end{proof}

\subsection{Adaptive robust path-following control}
To robustly achieve path-following for the uncertain system \eqref{mechanicalsystem} with the desired path \eqref{mdpath4sys}, this section designs an adaptive robust action $p_3$ added to the nominal control \eqref{nominal} to suppress uncertainty, resulting in a class of adaptive robust path-following controls. Due to the uncertainties, vanishing path-following error is not always possible, while the best achievable performance is typically the boundedness of the path-following error, which is the objective of the $p_3$ design.

To handle the mismatched uncertainty, we perform the following uncertainty decompositions. Let $H:=\Bar{M}M^{-1}-I$, which is employed to represent the uncertainty in $M$. Let
\begin{equation}\label{decomp}
 \setlength{\abovedisplayskip}{0pt}
\setlength{\belowdisplayskip}{0pt}
 H(q,\sigma,t)=\Bar{B}(q,t)\check{H}(q,\sigma,t) +\tilde{H}(q,\sigma,t). 
\end{equation}
 This is to divide $H$ into two portions based on $\Bar{B}$, i.e., the matched uncertainty $\Bar{B}\check{H}$ and the mismatched uncertainty $\tilde{H}$, which lies within and outside the range space of $\Bar{B}$, respectively. Throughout the paper, we use $\check{(\cdot)}$ and $\tilde{(\cdot)}$ to respectively denote matched and mismatched portions. {\color{blue}There are infinite choices for the particular expressions of  $\check{H}$ and $\tilde{H}$. For example, one may choose any 
$\check{H}$ and let $\tilde{H}:=H-\Bar{B}\check{H}$.} However, we chose {\color{blue}$\check{H}$ and then obtain $\tilde{H}$ by \eqref{decomp} }as follows
\begin{equation} \label{eq32}
 \setlength{\abovedisplayskip}{0pt}
\setlength{\belowdisplayskip}{0pt}
\begin{split}
 &\check{H}=(A\bar{M}^{-1}\bar{B})^{-1}A\bar{M}^{-1}H,\\   
 &\tilde{H}=H-\Bar{B}(A\bar{M}^{-1}\bar{B})^{-1}A\bar{M}^{-1}H,
 \end{split}
\end{equation}
where recall that $A\bar{M}^{-1}\bar{B}$ is invertible by Assumption \ref{assump_tau2}. Similar to \eqref{decomp}, other uncertain system matrices, $\Delta C$, $\Delta g$, and $\Delta B$,  can also be decomposed based on the nominal matrix $\Bar{B}$. The detailed expressions for these decompositions are omitted here for simplicity.  From \eqref{eq32}, we obtain the following identity regarding the mismatched uncertainty $\tilde{H}$
\begin{equation*}
 \setlength{\abovedisplayskip}{0pt}
\setlength{\belowdisplayskip}{0pt}
    A\bar{M}^{-1}\tilde{H}=A\bar{M}^{-1}[H-\bar{B}(A\bar{M}^{-1}\bar{B})^{-1}A\bar{M}^{-1}H]\equiv 0.
\end{equation*}
Similarly, it can be shown that the mismatched uncertainties of other uncertain system matrices satisfy analogous identities, i.e.,  $A\bar{M}^{-1}\tilde{(\cdot)}\equiv 0$, where $(\cdot)$ represents $C$, $g$ or $B$.

\begin{assump}\label{assump5}
There exists a (possibly unknown) constant $\rho_{W} >-1$ such that for all $(q,t)\in \mathbb{R}^{n}\times \mathbb{R}$
\begin{equation*}
 \setlength{\abovedisplayskip}{0pt}
\setlength{\belowdisplayskip}{0pt}
\frac{1}{2} \min\limits_{\sigma\in\Sigma} \lambda_{m}(W(q,\sigma,t)+W^{\top}(q,\sigma,t))\ge \rho_{W}
\end{equation*}
with
\begin{equation*}
 \setlength{\abovedisplayskip}{0pt}
\setlength{\belowdisplayskip}{0pt}
W(q,\sigma,t)=\check{B}(q,\sigma,t)+\check{H}(q,\sigma,t)B(q,\sigma,t).
\end{equation*}
\end{assump}

\begin{remark} 
	Note that $W$ {\color{blue}is synthesized by $M$ and $B$. The inequality of $W$ implies that the minimum eigenvalue of symmetric matrix $(W+W^\top)$ over the domain of uncertainty $\sigma$ should not be smaller than a constant $\rho_W>-1$. For extreme cases,} if $M$ has no uncertainty, then the condition reduces to $W=\check{B}$; If furthermore, $B$ has no uncertainty, then $W=0$, and the assumption can be removed. Therefore, this assumption simply characterizes {\color{blue}how much uncertainties in $M$ and $B$ can be tolerated in the control design, similar to \cite[Assumption 4]{chen2010AdaptiveRobustApproximatea}, \cite[Assumption 4]{yin2020tackling}}.
\end{remark}

\begin{assump}\label{assump6}
	There exists a  (possibly unknown) constant vector $\alpha\in (0,\infty)^k$ {\color{blue}with $k\in\mathbb{N}_+$} and a known function $\Pi(\cdot):(0,\infty)^k\times \mathbb{R}^n\times \mathbb{R}^n\times \mathbb{R}\times \mathbb{R}\to \mathbb{R}_+$ such that for all $(q,\dot{q},t)\in \mathbb{R}^n\times\mathbb{R}^n\times\mathbb{R}_+$,
	\begin{equation}\label{assumption6}
  \setlength{\abovedisplayskip}{0pt}
\setlength{\belowdisplayskip}{0pt}
		\begin{split}
&(1+\rho_W)^{-1}\underset{\sigma\in\Sigma}{\max}\|\check{H}(q,\sigma,t)(-C(q,\dot{q},\sigma,t)\dot{q}-g(q,\dot{q},\sigma,t)\\
&+B(q,\sigma,t)p_1(q,\dot{q},w,t))+(\check{B}(q,\sigma,t)p_1(q,\dot{q},w,t)\\
&-\check{C}(q,\dot{q},\sigma,t)\dot{q}-\check{g}(q,\dot{q},\sigma,t))\|\leq\Pi(\alpha,q,\dot{q},w,t).
		\end{split}
	\end{equation}
	Furthermore, the function $\Pi$ can be linearly factored with respect to $\alpha$, 
	\begin{equation*}
  \setlength{\abovedisplayskip}{0pt}
\setlength{\belowdisplayskip}{0pt}
		\Pi(\alpha,q,\dot{q},w,t)=\alpha^{\top}\breve{\Pi}(q,\dot{q},w,t),
	\end{equation*}
	where $\breve{\Pi}(\cdot):\mathbb{R}^n\times \mathbb{R}^n\times \mathbb{R}\times \mathbb{R}\to \mathbb{R}_+^k$.
\end{assump}

\begin{remark} 
	Since $\rho_W>-1$, we have $\rho_W+1>0$. {\color{blue}Note that the left-hand side of \eqref{assumption6} aggregates all system uncertainties, thus} the function $\Pi$ serves as {\color{blue}an envelope bound that upper-bounds the aggregated uncertainty term over the uncertainty domain}. While the structure of $\Pi$ is known, the constant vector $\alpha$ is possibly unknown {\color{blue}since it is relevant to the unknown bounding set $\Sigma$}. Therefore, Assumption~\ref{assump6} simply aims to provide a functional envelope by $\Pi$ to characterize the worst-case effect of system uncertainties.
\end{remark}

The adaptive robust path following control is proposed as:
\begin{equation}\label{arpc}
 \setlength{\abovedisplayskip}{0pt}
\setlength{\belowdisplayskip}{0pt}
\begin{split}
\tau(t)=p_{1}(q(t),\dot{q}(t),w(t),t)+p_{2}(q(t),\dot{q}(t),w(t),t)\\+p_{3}(\hat{\alpha}(t),q(t),\dot{q}(t),w(t),t).
\end{split}
\end{equation}
Here, $p_{1,2}$ are given in \eqref{nominal}, and $p_3$ is designed as
\begin{equation}\label{CFC_p3}
 \setlength{\abovedisplayskip}{0pt}
\setlength{\belowdisplayskip}{0pt}
\begin{split}
p_{3}(\hat{\alpha},q,\dot{q},w,t)=-\eta(\hat{\alpha},q,\dot{q},w,t)\upsilon(\hat{\alpha},q,\dot{q},w,t)\Pi(\hat{\alpha},q,\dot{q},w,t),
\end{split}
\end{equation}
where
\begin{equation*}
 \setlength{\abovedisplayskip}{0pt}
\setlength{\belowdisplayskip}{0pt}
\upsilon=\bar{B}^{\top}\bar{M}^{-1} A^{\top} P \beta \Pi,\quad  \setlength{\abovedisplayskip}{0pt}
\setlength{\belowdisplayskip}{0pt}
\eta=\begin{cases}
\frac{1}{\| \upsilon \|}, & \mbox{if   } \| \upsilon \| > \mu, \\
\frac{1}{\mu}, & \mbox{otherwise},
\end{cases}
\end{equation*}
and $\mu$ is a positive scalar constant. {\color{blue}The vector $-\eta\upsilon$ defines the ``direction" to compensate uncertainties.} The parameter vector $\hat{\alpha}\in \mathbb{R}^{k}$ is governed by the following adaptive law:
\begin{equation}\label{CFC_zi2}
 \setlength{\abovedisplayskip}{0pt}
\setlength{\belowdisplayskip}{0pt}
\dot{\hat{\alpha}}=\begin{cases}
\ell_1 \breve{\Pi}\|\breve{\beta}\| -\ell_2 \hat{\alpha}, & \mbox{if   }\|\breve{\Pi}\| \|\breve{\beta}\| > \varepsilon,\\
\ell_1 \breve{\Pi} \frac{\| \breve{\Pi}\| \|\breve\beta\|^{2}}{\varepsilon}-\ell_2 \hat{\alpha}, & \mbox{otherwise},
\end{cases}
\end{equation}
where $\varepsilon$, $\ell_{1,2}$ are positive scalar constants, $\hat{\alpha}_{i}(t_{0})>0$ and $\breve{\beta}=\bar{B}^{\top}DA^{\top}P \beta$. Hence $\upsilon=\breve{\beta} \Pi$.

\begin{remark} 
The piecewise continuous adaptive law \eqref{CFC_zi2} is of leakage type, and the second term on the RHS is the leak. We can obtain $\hat{\alpha}_i(t)>0$ for all $t\geq t_0$ with $\hat{\alpha}_{i}(t_{0})>0$. This is because the first term on the RHS is always non-negative, and by the property of differential inequality, the solution of \eqref{CFC_zi2} is always no less than that of $\dot{\hat{\alpha}}=-\ell_2\hat{\alpha}$ with $\hat{\alpha}_{i}(t_{0})>0$, which is always positive. The adaptive parameter $\hat{\alpha}$ tends to emulate the unknown $\alpha$ in the bound function $\Pi$, which will be proved in Theorem \ref{thm2}. As $p_3$ is designed based on $\hat{\alpha}$, it is called adaptive robust action. Consequently, the resulting control $\tau$ consisting of $p_{1,2,3}$ is called adaptive robust path-following control. {\color{blue}This control design operates free from limitations on the system's number of degrees of freedom, making it suitable for a wide range of practical applications.}
\end{remark}

\begin{theorem}[{\color{blue}Boundedness for uncertain systems}]\label{thm2}
Under Assumptions \ref{assump3}-\ref{assump6}, the adaptive robust path-following control~\eqref{arpc} ensures that, for the uncertain mechanical system \eqref{mechanicalsystem} with the desired path \eqref{mdpath4sys}, both the path-following error $\dist(\zeta(t), \mathcal{P})$ and the adaptive error $\|\hat{\alpha}(t)-\alpha\|$ are uniformly ultimately bounded.
\end{theorem}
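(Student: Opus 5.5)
The plan is to prove that the VFC-following error $\beta$ and the adaptive error $\hat\alpha-\alpha$ are uniformly ultimately bounded by a Lyapunov argument. Proposition \ref{coro2} then gives uniform ultimate boundedness of $\dist(\xi(t),\mathcal{P}^{hgh})$, and Remark \ref{markphgh} transfers this to $\dist(\zeta(t),\mathcal{P})$.

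\textbf{Step 1: error dynamics.} Write $\dot\beta = A\ddot q - b$ with $\ddot q = M^{-1}(B\tau - C\dot q - g)$. Using $H=\bar M M^{-1}-I$, we have $M^{-1}=\bar M^{-1}(I+H)$. Substitute $\tau=p_1+p_2+p_3$, apply the decompositions \eqref{decomp}--\eqref{eq32} (and the analogous ones for $\Delta C,\Delta g,\Delta B$), and use the identities $A\bar M^{-1}\tilde{(\cdot)}\equiv 0$ to remove every mismatched term. By construction of $p_1$, the nominal part cancels $b$ and the nominal drift. What remains should be
\[
\dot\beta = A\bar M^{-1}\bar B\big[(I+W)(p_2+p_3) + \Phi\big],
\]
where $\Phi$ is exactly the aggregated uncertainty inside the norm in \eqref{assumption6}.

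\textbf{Step 2: Lyapunov function.} Take
\[
V=\beta^\top P\beta + (1+\rho_W)\,\ell_1^{-1}(\hat\alpha-\alpha)^\top(\hat\alpha-\alpha),
\]
possibly with a constant prefactor adjusted to match the definition of $\breve\beta$ (reading $D=\bar M^{-1}$). Then $\dot V$ has four contributions.
\begin{itemize}
\item The $p_2$ term gives $-2\kappa\,\beta^\top P A\bar M^{-1}\bar B(I+W)(A\bar M^{-1}\bar B)^\top P\beta$, which is at most $-2\kappa(1+\rho_W)\underline\lambda\|\beta\|^2$. This uses Assumption \ref{assump5}. I still need to check the order of $W$ between the factors. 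If it is off, I would handle it via $\breve\beta^\top W\breve\beta\ge\rho_W\|\breve\beta\|^2$ by writing the term as $-2\kappa\breve\beta^\top(I+W)\breve\beta$ and then applying Assumption \ref{assump_tau2} through $\|\breve\beta\|^2\ge\underline\lambda\|\beta\|^2$.
\item The uncertainty term is bounded by $2\|\breve\beta\|(1+\rho_W)\alpha^\top\breve\Pi$.
\item The $p_3$ term, with $\upsilon=\breve\beta\Pi(\hat\alpha)$, gives $-2\eta\,\breve\beta^\top(I+W)\breve\beta\,\Pi^2(\hat\alpha)\le-2(1+\rho_W)\eta\|\upsilon\|^2$.
\item The adaptive term contributes $2(1+\rho_W)(\hat\alpha-\alpha)^\top[\breve\Pi\|\breve\beta\|\text{ or its smoothed version}] - 2(1+\rho_W)\ell_2\ell_1^{-1}(\hat\alpha-\alpha)^\top\hat\alpha$.
\end{itemize}

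\textbf{Step 3: case analysis.} In the case $\|\breve\Pi\|\|\breve\beta\|>\varepsilon$ with $\|\upsilon\|>\mu$, the terms combine to $2(1+\rho_W)\big[\alpha^\top\breve\Pi\|\breve\beta\|+(\hat\alpha-\alpha)^\top\breve\Pi\|\breve\beta\|-\hat\alpha^\top\breve\Pi\|\breve\beta\|\big]=0$, since $\|\upsilon\|=\|\breve\beta\|\Pi(\hat\alpha)$ and $\hat\alpha>0$. In the boundary cases ($\|\upsilon\|\le\mu$, or the smoothed adaptive branch) the residual is bounded by constants of the form $(1+\rho_W)\mu/2$ and a multiple of $(1+\rho_W)\varepsilon$, obtained by the standard $x-x^2/\mu\le\mu/4$ argument. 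I expect this step to be the main obstacle, because the mismatch between the two switching regions (threshold $\mu$ on $\|\upsilon\|$ versus $\varepsilon$ on $\|\breve\Pi\|\|\breve\beta\|$) produces cross terms $\alpha^\top\breve\Pi\|\breve\beta\|$ that must be absorbed. I would first try bounding them using that $\|\breve\Pi\|\|\breve\beta\|\le\varepsilon$ in the smoothed branch, giving $\alpha^\top\breve\Pi\|\breve\beta\|\le\|\alpha\|\varepsilon$. Combined with $-(\hat\alpha-\alpha)^\top\hat\alpha\le-\tfrac12\|\hat\alpha-\alpha\|^2+\tfrac12\|\alpha\|^2$, this yields
\[
\dot V\le -c_1\|\beta\|^2-c_2\|\hat\alpha-\alpha\|^2+c_3
\]
with positive constants $c_1,c_2,c_3$.

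\textbf{Step 4: conclusion.} Since $V$ is quadratic and sandwiched between $\lambda_{\min}\|(\beta,\hat\alpha-\alpha)\|^2$ and $\lambda_{\max}\|(\beta,\hat\alpha-\alpha)\|^2$, the standard argument (e.g., \cite[Theorem 4.18]{khalil2002nonlinear}) gives uniform ultimate boundedness of $(\beta,\hat\alpha-\alpha)$, and hence uniform boundedness of $\|\beta\|$. Proposition \ref{coro2}, under Assumption \ref{assump3}, then gives uniform ultimate boundedness of $\dist(\xi,\mathcal{P}^{hgh})$, and therefore of $\dist(\zeta,\mathcal{P})$ via Remark \ref{markphgh}.
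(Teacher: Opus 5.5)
Your proposal is correct and follows essentially the same route as the paper's appendix proof. It uses the same Lyapunov function $V=\beta^\top P\beta+\ell_1^{-1}(1+\rho_W)\|\hat\alpha-\alpha\|^2$ and the same reduction of $2\beta^\top P\dot\beta$ to $2\breve\beta^\top[(I+W)(p_2+p_3)+\Phi]$, bounded via Assumptions~\ref{assump_tau2}--\ref{assump6}; it then makes the same case split on $\|\upsilon\|$ versus $\mu$ and the same final appeal to Proposition~\ref{coro2} and Remark~\ref{markphgh}.

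In the smoothed adaptive branch you keep the factor $1-\|\breve\Pi\|\|\breve\beta\|/\varepsilon\in[0,1]$, discard the $\hat\alpha^\top\breve\Pi$ part by positivity, and bound the $\alpha^\top\breve\Pi\|\breve\beta\|$ part by $\|\alpha\|\varepsilon$. This is more careful than the paper, whose inequality in that branch silently assumes $(\hat\alpha-\alpha)^\top\breve\Pi\ge 0$.
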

\begin{proof}
See the Appendix.    
\end{proof}

\begin{remark}
 {\color{blue}From the proof of Theorem \ref{thm2}, the boundedness region is primarily determined by the gain parameter $\kappa$, the matrix $P$, and the adaptive-law parameters $\ell_{1,2}$. These four factors collectively influence the extent of boundedness. Notably, for fixed $P$ and $\ell_{1,2}$, the expression characterizing boundedness indicates that increasing $\kappa$ tends to shrink---and never enlarges---the boundedness region. Hence, larger $\kappa$ can enhance path-following accuracy. Moreover, while the main results are derived using the singularity-free GVF, they are also directly applicable to the general GVF presented in \cite{yao2020path}. In fact, such application is even more straightforward, as it circumvents the need for the virtual state $w$.}
\end{remark}

\section{Simulation examples}  \label{sec_sim}
\subsection{An underactuated planar vertical take-off and landing aircraft system}\label{underactuated}
This section presents simulations on an underactuated planar vertical take-off and landing (PVTOL) aircraft \cite{consolini2010path} for demonstrations. The corresponding dynamic model is
\begin{equation}\label{PVTOL}
 \setlength{\abovedisplayskip}{0pt}
\setlength{\belowdisplayskip}{0pt}
    \begin{split}
    &m(t)\ddot{x}(t)=-(\sin\theta(t))\tau_1(t)+(\cos\theta(t))\tau_2(t)+d_x(t),\\
    &m(t)\ddot{y}(t)=(\cos\theta(t))\tau_1(t)+(\sin\theta(t))\tau_2(t)-m(t)g_0+d_y(t),\\
    &J(t)\ddot{\theta}(t)=\tau_2(t)+d_\theta(t),
    \end{split}
\end{equation}
where states $(x,y)$ and $\theta$ represent respectively the position and roll angle, control inputs $\tau_1$ and $\tau_2$ are respectively the thrust force and rolling torque, $m$ and $J$ are respectively the mass and the moment of inertia containing time-varying uncertainties, $d_{x,y,\theta}$ are unknown external disturbances, $g_0$ is the gravitational acceleration. The uncertain parameters $m$ and $J$ can be decomposed as $m(t)=\bar{m}+\Delta m(t)$, $J(t)=\bar{J}+\Delta J (t)$. The nominal PVTOL aircraft system can be obtained as
\begin{equation}\label{nomPVTOL}
 \setlength{\abovedisplayskip}{0pt}
\setlength{\belowdisplayskip}{0pt}
    \begin{split}
    &\bar{m}\ddot{x}=-(\sin\theta)\tau_1+(\cos\theta)\tau_2,\\
    &\bar{m}\ddot{y}=(\cos\theta)\tau_1+(\sin\theta)\tau_2-\bar{m}g_0,\\
    &\bar{J}\ddot{\theta}=\tau_2.
    \end{split}
\end{equation}
Let $q:=[x,y,\theta]^\top$. It is straightforward to cast \eqref{PVTOL} and \eqref{nomPVTOL} into the form of \eqref{mechanicalsystem} and \eqref{nominalsystem}, respectively. The particular expressions of $M$, $C$, $g$, $B$ and $\bar{M}$, $\bar{C}$, $\bar{g}$, $\bar{B}$ in this case are:
\begin{equation*}
 \setlength{\abovedisplayskip}{0pt}
\setlength{\belowdisplayskip}{0pt}
\begin{split}
    &M=\begin{bmatrix}
        m &0 &0\\
        0 & m & 0\\
        0 & 0 & J
    \end{bmatrix},\ \bar{M}=\begin{bmatrix}
        \bar{m} &0 &0\\
        0 & \bar{m} & 0\\
        0 & 0 & \bar{J}
    \end{bmatrix},\\
    &C=\bar{C}=\begin{bmatrix}
        0 & 0 & 0\\0 & 0 & 0\\0 & 0 & 0
    \end{bmatrix},\ g=\begin{bmatrix}
        -d_x\\mg_0-d_y\\-d_{\theta}\end{bmatrix},\\ &\bar{g}=\begin{bmatrix}
            0\\{\color{blue}\bar{m}g_0}\\0
        \end{bmatrix},\ B=\bar{B}=\begin{bmatrix}
            -\sin\theta & \cos\theta\\
            \cos\theta & \sin\theta\\
            0 & 1
        \end{bmatrix}.
\end{split}
\end{equation*}

Three different desired paths are considered, i.e., the non-closed sinusoidal path $\mathcal{P}_1=\{(x,y)\in \mathbb{R}^2:y=\sin x\}$, the closed Cassini’s oval $(x^2+y^2+ r_a^2)^2-4r_a^2x^2-r_b^4 = 0$ where $r_a=3$, $r_b=1.05r_a$, and the self-intersecting lemniscate of Bernoulli $\mathcal{P}_3=\{(x,y)\in \mathbb{R}^2:(x^2+y^2)^2=x^2-y^2\}$. To apply VFCFC,  $\mathcal{P}_1$, $\mathcal{P}_2$, and $\mathcal{P}_3$ are respectively parameterized as: $\mathcal{P}_1^{hgh}=\{(x,y,w)\in \mathbb{R}^3:x=w,y=\sin w\}$, $\mathcal{P}_2^{hgh}=\{(x,y,w)\in \mathbb{R}^3:x= \cos(\omega) \sqrt{ r_a^2 \cos(2\omega) + \sqrt{ r_b^4 - r_a^4 \sin^2(2\omega) } } ,y=\sin(\omega) \sqrt{ r_a^2 \cos(2\omega) + \sqrt{ r_b^4 - r_a^4 \sin^2(2\omega) } }\}$, $\mathcal{P}_3^{hgh}=\{(x,y,w)\in \mathbb{R}^3: x=\frac{\cos w}{1+\sin^2 w},\ y=\frac{\cos w\sin w}{1+\sin^2 w}\}$. As presented in \eqref{VFC}, all three VFCs for $\mathcal{P}_{1,2,3}$ can be readily constructed with $A=\begin{bmatrix}
        1 & 0 & 0\\
       0& 1 & 0
    \end{bmatrix}$ that is of full row rank, and the particular expressions of $\chi^{\rm s}$ are omitted here for simplicity.  Therefore, we have 
\begin{equation*}
 \setlength{\abovedisplayskip}{0pt}
\setlength{\belowdisplayskip}{0pt}
    A\bar{M}^{-1}\bar{B}=\frac{1}{\bar{m}}\begin{bmatrix}
        -\sin\theta & \cos\theta\\
        \cos\theta & \sin\theta
    \end{bmatrix}
\end{equation*}
 for all three paths. Assumptions \ref{feasbility}--\ref{assump_tau2} can be verified for all three paths straightforwardly. Regarding Assumption \ref{assump5}, we have
\begin{equation*}
 \setlength{\abovedisplayskip}{0pt}
\setlength{\belowdisplayskip}{0pt}
\begin{split}
    W&=\check{H}B+\check{B}=(A\bar{M}^{-1}\bar{B})^{-1}A\bar{M}^{-1}(\bar{M}M^{-1}-I)B\\&+(A\bar{M}^{-1}\bar{B})^{-1}A\bar{M}^{-1}\Delta B=\left(\frac{ \bar{m}}{m}-1\right)I.
\end{split}
\end{equation*}
Therefore, Assumption \ref{assump5} can be verified, since \begin{equation*}
 \setlength{\abovedisplayskip}{0pt}
\setlength{\belowdisplayskip}{0pt}
    \frac{1}{2}\min_{\sigma\in\Sigma}\lambda_{m}(W+W^{\top})=\min_{\sigma\in\Sigma}\left(\frac{ \bar{m}}{m}-1\right)>-1.
\end{equation*} 
Regarding Assumption \ref{assump6}, the inequality \eqref{assumption6} is deduced as follows:
\begin{equation*}
 \setlength{\abovedisplayskip}{0pt}
\setlength{\belowdisplayskip}{0pt}
    \begin{split}
        &(1+\rho_W)^{-1}\max_{\sigma\in\Sigma}\|\check{H}(-C\dot{q}-g+Bp_1)+(\check{B}p_1-\check{C}\dot{q}-\check{g})\|\\
        &=(1+\rho_W)^{-1}\max_{\sigma\in\Sigma}\|-\check{H}(C\dot{q}+g)-(\check{C}\dot{q}-\check{g})+(\check{H}B\\
        &\quad+\check{B})p_1)\|\\
        &=(1+\rho_W)^{-1}\max_{\sigma\in\Sigma}\|-(A\bar{M}^{-1}\bar{B})^{-1}A\bar{M}^{-1}H(C\dot{q}+g)\\
        &\quad -(A\bar{M}^{-1}\bar{B})^{-1}A\bar{M}^{-1}(C\dot{q}+g)+Wp_1\|\\
        &=(1+\rho_W)^{-1}\max_{\sigma\in\Sigma}\|-(A\bar{M}^{-1}\bar{B})^{-1}A\bar{M}^{-1}(H+I)(C\dot{q}\\
        &\quad+g)+Wp_1\|\\
        &\le(1+\rho_W)^{-1}\max_{\sigma\in\Sigma}\|(H+I)(C\dot{q}+g)\|\|(A\bar{M}^{-1}\bar{B})^{-1}A\bar{M}^{-1}\|\\
        &\quad+(1+\rho_W)^{-1}\max_{\sigma\in\Sigma}\|W\|\|p_1\|.
    \end{split}
\end{equation*}
Therefore, we choose $\Pi$ as 
\begin{equation*}
 \setlength{\abovedisplayskip}{0pt}
\setlength{\belowdisplayskip}{0pt}
    \Pi(\alpha,q,\dot{q},\omega,t)=\alpha^{\top}\breve{\Pi}(q,\dot{q},\omega,t):=[\alpha_1\;\alpha_2]\begin{bmatrix}
            \breve{\Pi}_1(q,\dot{q},t)\\
            \breve{\Pi}_2(q,\dot{q},\omega,t)
        \end{bmatrix},
\end{equation*}
where $\alpha_1=(1+\rho_W)^{-1}\max_{\sigma\in\Sigma}\|(H+I)(C\dot{q}+g)\|$, $\alpha_2=(1+\rho_W)^{-1}\max_{\sigma\in\Sigma}\|W\|$, $\breve{\Pi}_1=\|(A\bar{M}^{-1}\bar{B})^{-1}A\bar{M}^{-1}\|$, $\breve{\Pi}_2=\|p_1\|$. Both $\alpha_1$ and $\alpha_2$ are unknown as the uncertainty bound is unknown.

For simulations, system parameters and uncertainties are chosen as $\bar{m}=1\text{kg}$, $\bar{J}=0.5\text{kg}\cdot \text{m}^2$, $\Delta m(t)=0.3\bar{m}\sin(5t)$, $\Delta J(t)=0.3\bar{J}\sin(7.5t)$, $d_x(t)=4\sin(2t)$, $d_y(t)=4\cos(4t)$, $d_\theta(t)=4\sin(6t)$, $x(0)=2.2 \;\mathrm{m}$, $y(0)=0.2 \;\mathrm{m}$, $\theta(0)=1.5 \;\mathrm{rad}$, $\dot{x}(0)=1 \;\mathrm{m/s}$, $\dot{y}(0)=0 $, $\dot{\theta}(0)=0$. The simulations are implemented in two steps, with the first step testing the nominal VFCFC (NVFCFC) \eqref{nominal} for nominal system \eqref{nomPVTOL}, and the second step testing the adaptive robust VFCFC (ARVFCFC) \eqref{arpc} for uncertain system \eqref{PVTOL}.

In the first step, the control parameters of NVFCFC are chosen as $k_1=k_2=1$, $P$ the identity matrix, $\kappa=5$, $\omega(0)=0.1$. The conventional CFC (CCFC), and the immersion and invariance-based orbital stabilization (IIOS) in \cite{yi2020} are also tested for comparisons. When applying the CCFC design, the indispensable Assumption \ref{feasbility} can be satisfied by the resulting constraint to encode $\mathcal{P}_1$, but cannot be satisfied by the resulting constraint to encode $\mathcal{P}_2$ and $\mathcal{P}_3$. Specifically, by invoking the constraint design method of CCFC presented in ``Motivated solution strategy" of Section \ref{sec_problemform}, one obtains constraints to encode $\mathcal{P}_2$ and $\mathcal{P}_3$ whose second-order forms as \eqref{matrixform2} have no solution for $(\ddot{x},\ddot{y})$ at $(x,y,\dot{x},\dot{y})=(0,0,0,0)$, violating Assumption \ref{feasbility}. On the other hand, IIOS in \cite{yi2020} is only applicable to non-intersecting closed paths, therefore it is only tested on $\mathcal{P}_2$.

\begin{figure}[htbp]
            \centering
            \includegraphics[scale=0.45]{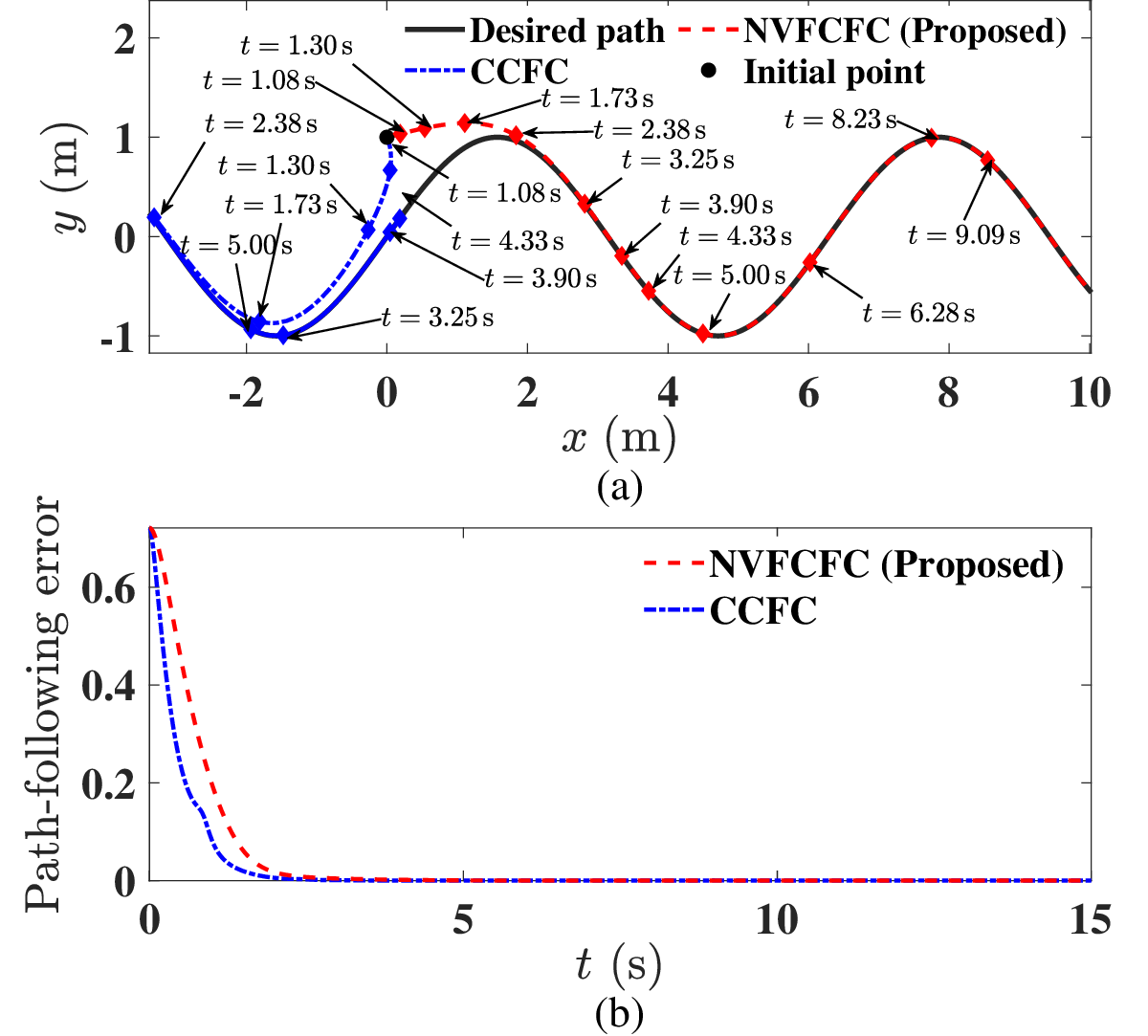}
            \caption{(a) Trajectories and (b) path-following errors of the nominal PVTOL aircraft system following path $\mathcal{P}_1$.}
            \label{cominal_compare_traditional}
\end{figure}

Figs.\ref{cominal_compare_traditional}--\ref{nominal_robust} show the simulation results of the first step. {\color{blue}It shows that NVFCFC completes all three path-following tasks with vanishing path-following errors. Furthermore, NVFCFC exhibits a certain degree of robustness to system parameter variations (Fig. \ref{nominal_robust}). Despite achieving faster error convergence than NVFCFC, CCFC fails the path-following task of $\mathcal{P}_1$, since the system trajectory moves back and forth in the neighborhood of some point on the desired path after convergence, failing to traverse the desired path as intended (Fig.\ref{cominal_compare_traditional}). Therefore, CCFC fails any of the three path-following tasks. IIOS only completes the path-following task of $\mathcal{P}_2$, but produces a larger path-following error after convergence than NVFCFC, despite consuming significantly higher control effort  (Figs. \ref{nominal_compare_path} and \ref{nominal_compare_tau}).}

\begin{figure}[htbp]
    \centering
            \centering
            \includegraphics[scale=0.3]{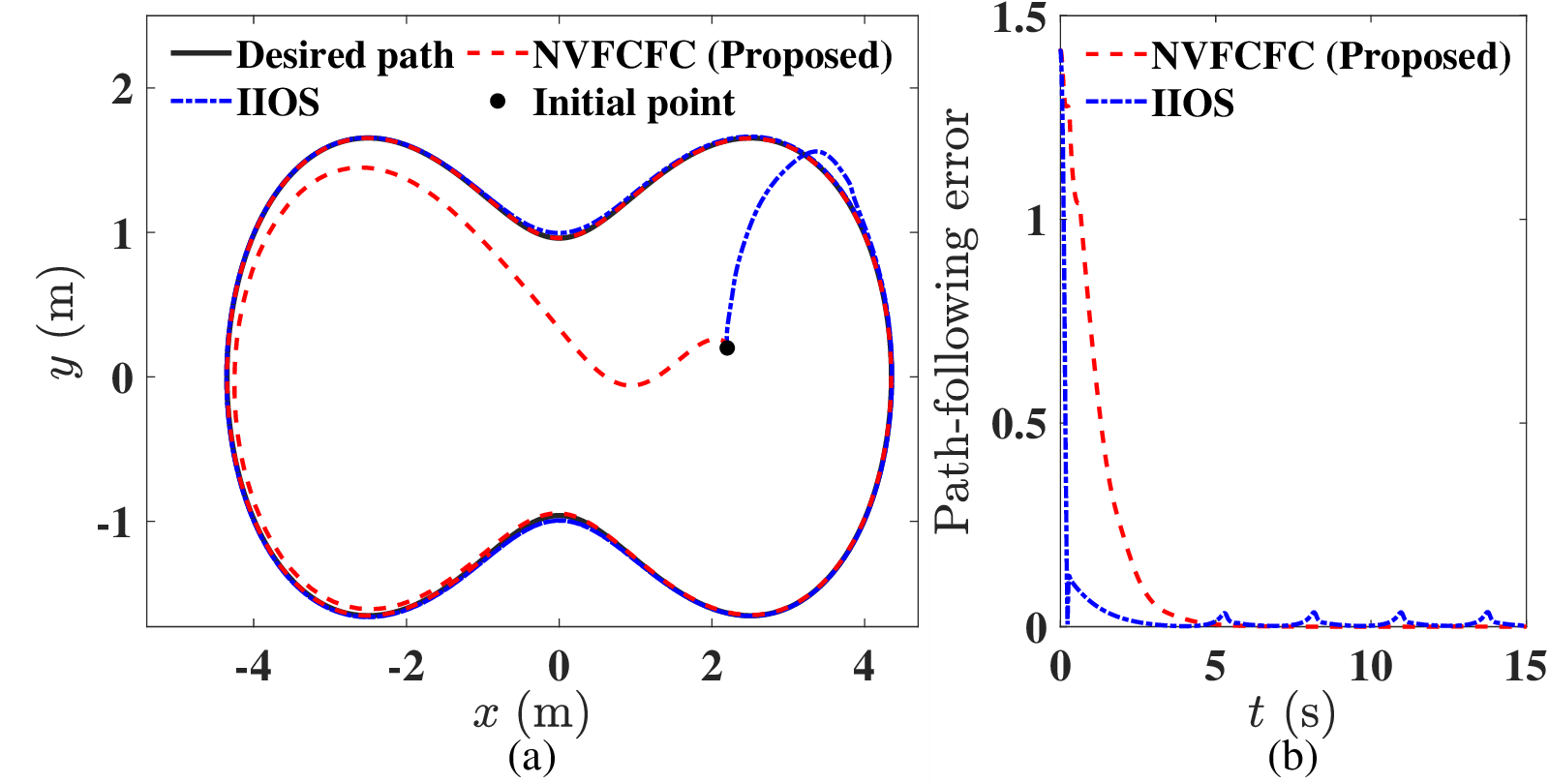}
       \caption{{\color{blue}(a) Trajectories and (b) path-following errors of the nominal PVTOL aircraft system following path $\mathcal{P}_2$.}}
 \label{nominal_compare_path}
\end{figure}

\begin{figure}[htbp]
            \centering
            \includegraphics[scale=0.3]{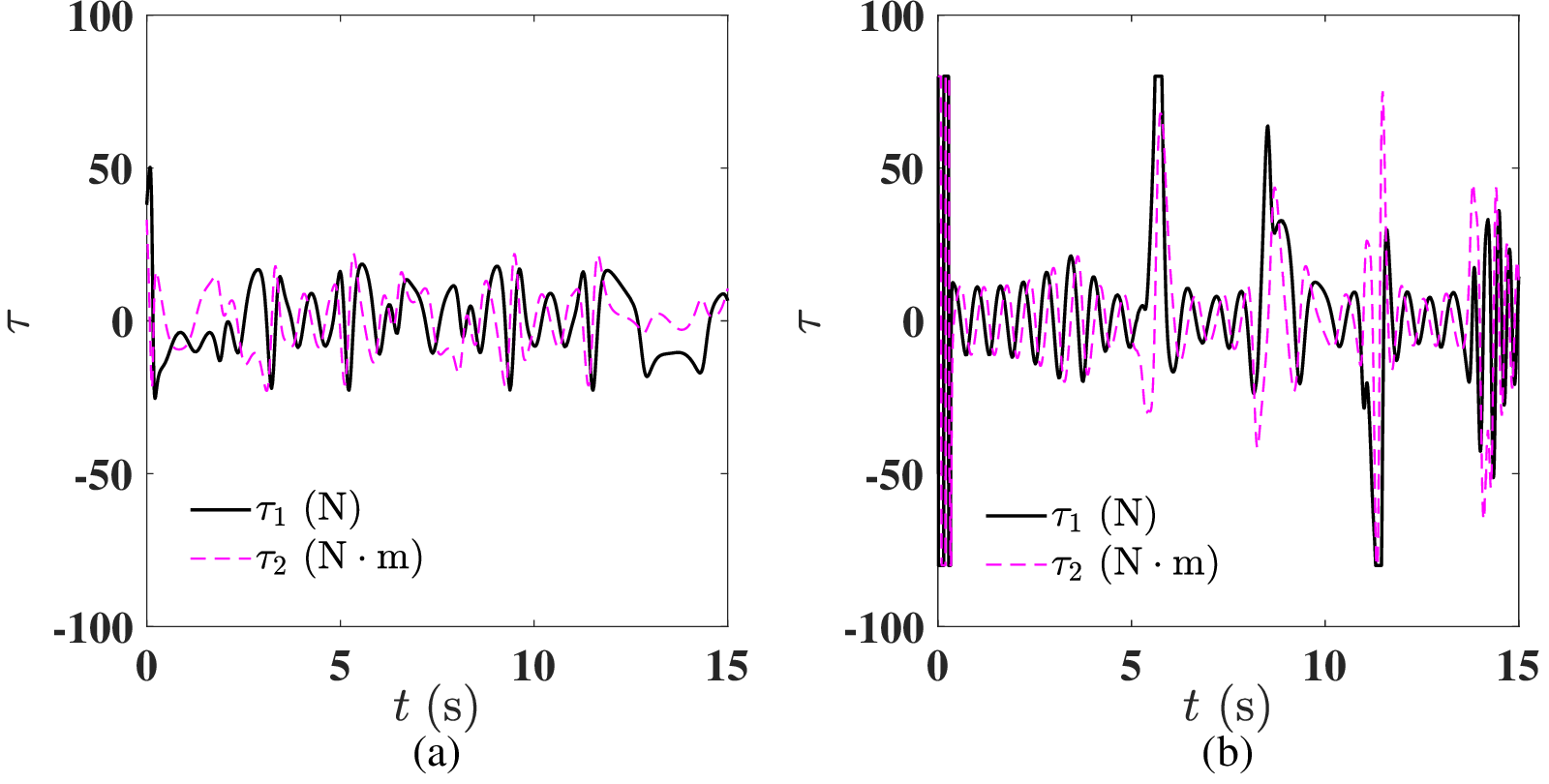}
            \caption{{\color{blue}Control inputs of the nominal PVTOL aircraft system following $\mathcal{P}_2$ with (a) NVFCFC and (b) IIOS.}}
            \label{nominal_compare_tau}
\end{figure}

\begin{figure}[htbp]
            \centering
            \includegraphics[scale=0.3]{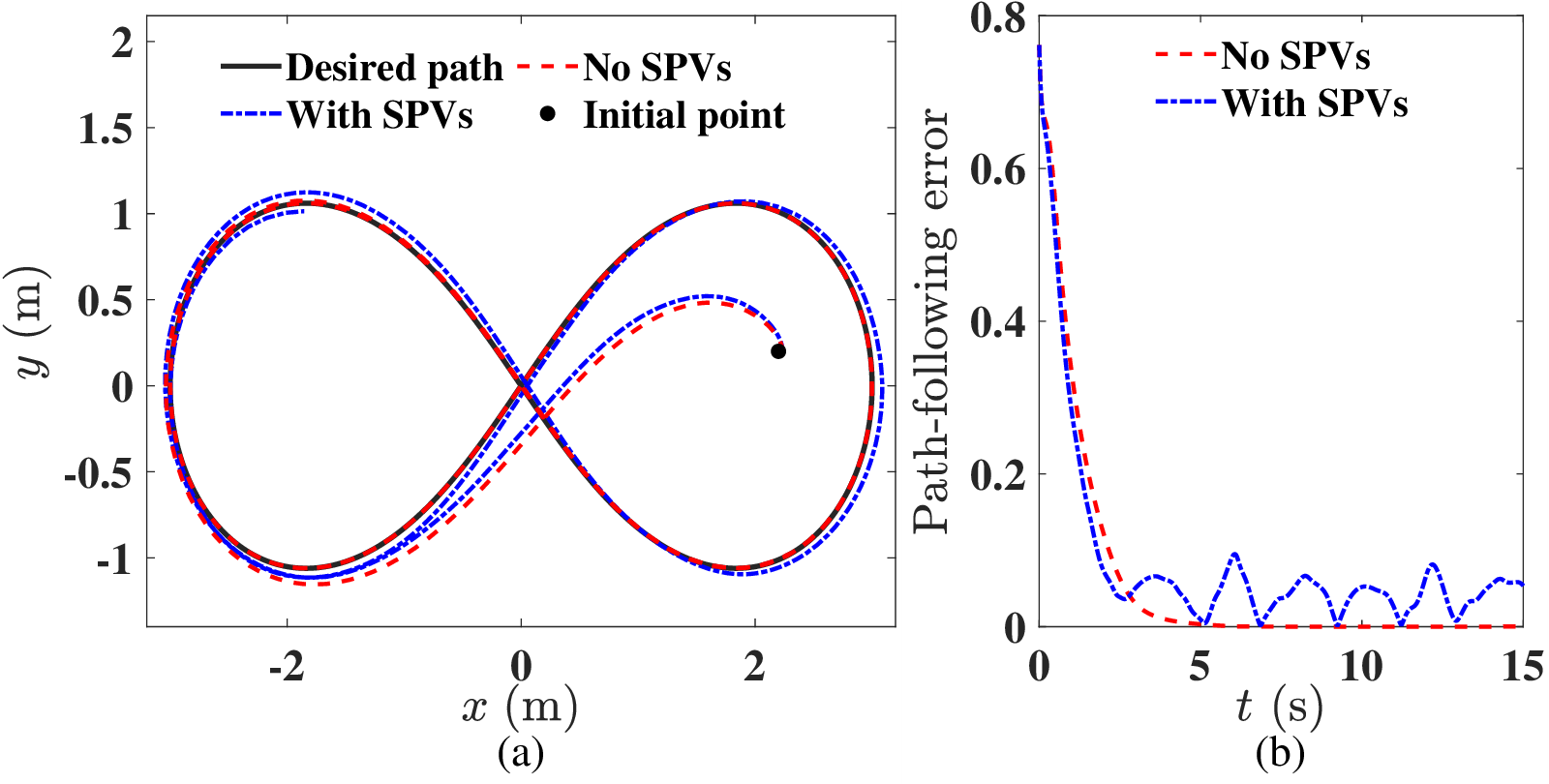}
            \caption{{\color{blue}(a) Trajectories and (b) path-following errors of the nominal PVTOL aircraft system following path $\mathcal{P}_3$ under NVFCFC with system parameter variations (SPVs) of up to $10\%$, and without SPVs.}}
            \label{nominal_robust}
\end{figure}

In the second step, the control parameters of $p_3$ in ARVFCFC are chosen as $\ell_{1}=0.5$, $\ell_{2}=0.1$, $\varepsilon=1$. Other parameters are chosen to be the same as those in NVFCFC. NVFCFC and IIOS are also tested for comparisons. Figs. \ref{sin_adaptive}--\ref{p123_p12_path} show the corresponding simulation results. It shows that ARVFCFC completes all three path-following tasks with ultimately bounded path-following errors.  After some time adaptive parameters $\hat{\alpha}_{1,2}$ enter some positive region, consistent with Theorem \ref{thm2} (Figs. \ref{sin_adaptive}(b), \ref{adaptive_compare_path}(b), \ref{p123_p12_path}(b)). {\color{blue}IIOS only completes the path-following task of $\mathcal{P}_2$, but produces a significantly larger path-following error after convergence compared with ARVFCFC, despite consuming significantly higher control effort  (Figs. \ref{adaptive_compare_path} and \ref{adaptive_compare_tau}). NVFCFC fails the path-following task of $\mathcal{P}_3$ since the system trajectory significantly deviates from $\mathcal{P}_3$ (Fig. \ref{p123_p12_path}). This reflects the effectiveness of incorporating $p_3$ to address uncertainties.}

\begin{figure}[htbp]
  \centering
  \includegraphics[scale=0.45]{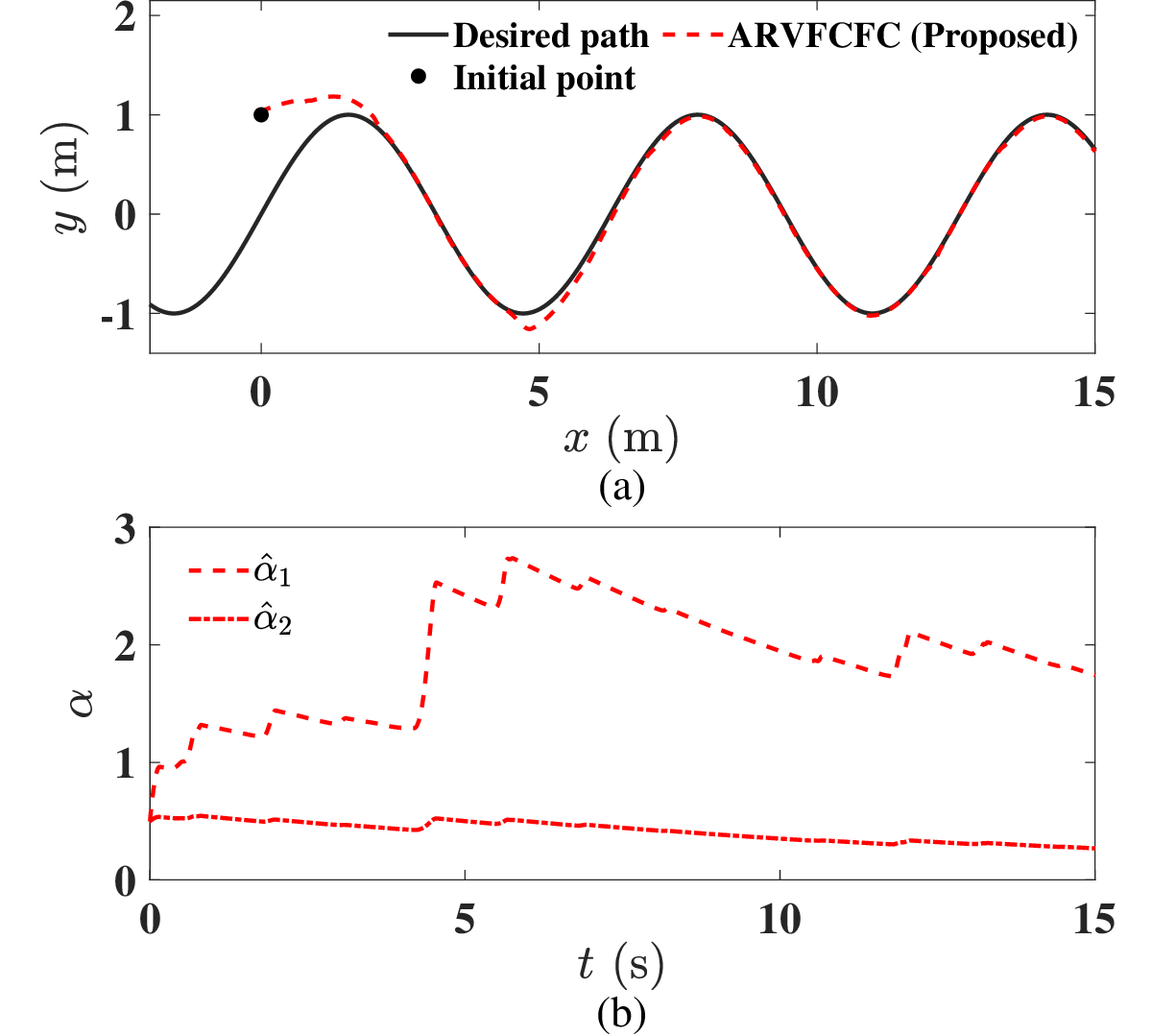}
  \caption{(a) Trajectory and (b) adaptive parameter histories of the uncertain PVTOL aircraft system following path $\mathcal{P}_1$ with ARVFCFC.} 
  \label{sin_adaptive}
\end{figure}

\begin{figure}[htbp]
  \centering
  \includegraphics[scale=0.3]{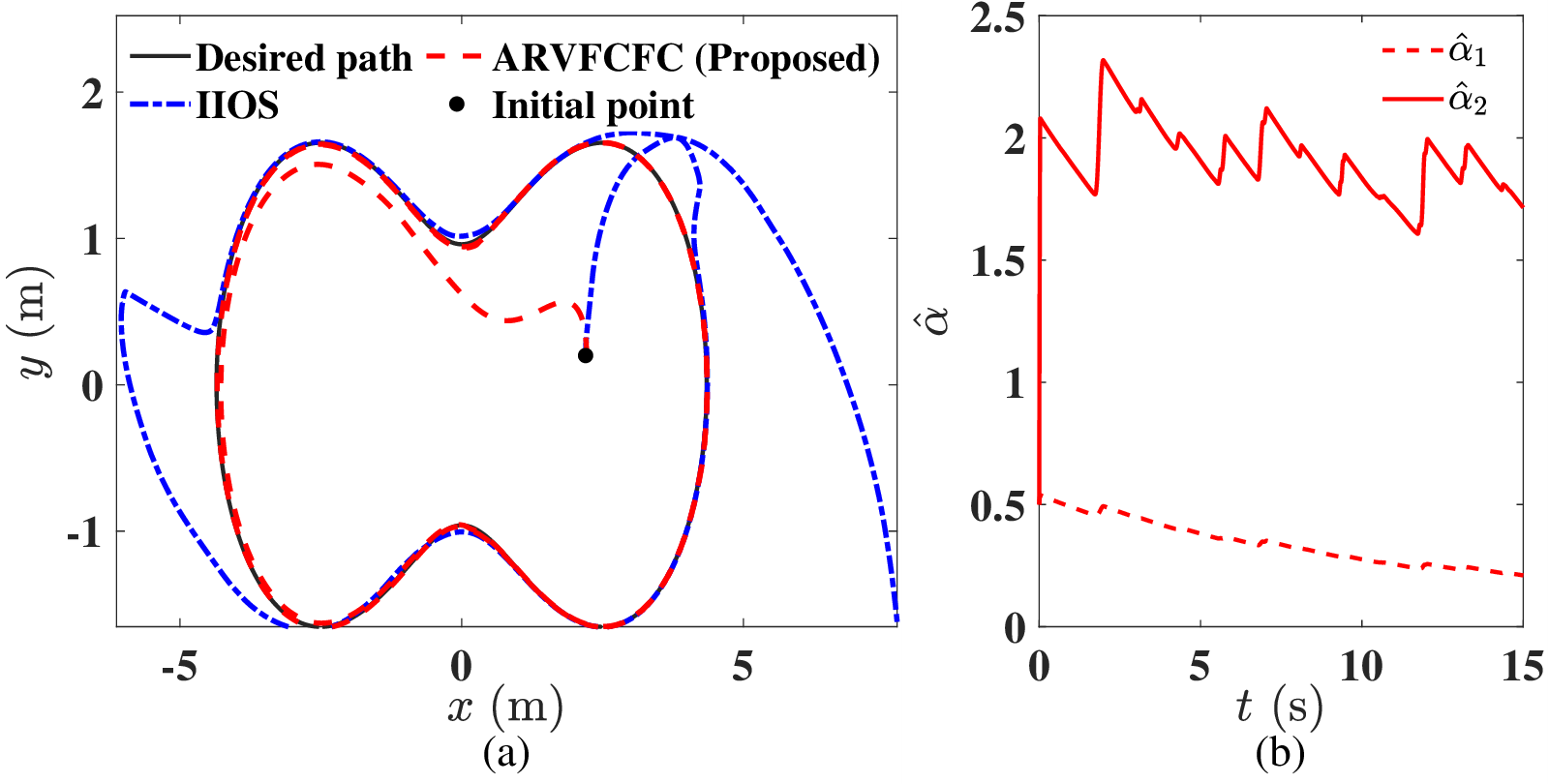}
  \caption{{\color{blue}(a) Trajectories of the uncertain PVTOL aircraft system following path $\mathcal{P}_2$, and (b) the corresponding adaptive parameter histories of ARVFCFC.}} 
  \label{adaptive_compare_path}
\end{figure}

\begin{figure}[htbp]
            \centering
            \includegraphics[scale=0.3]{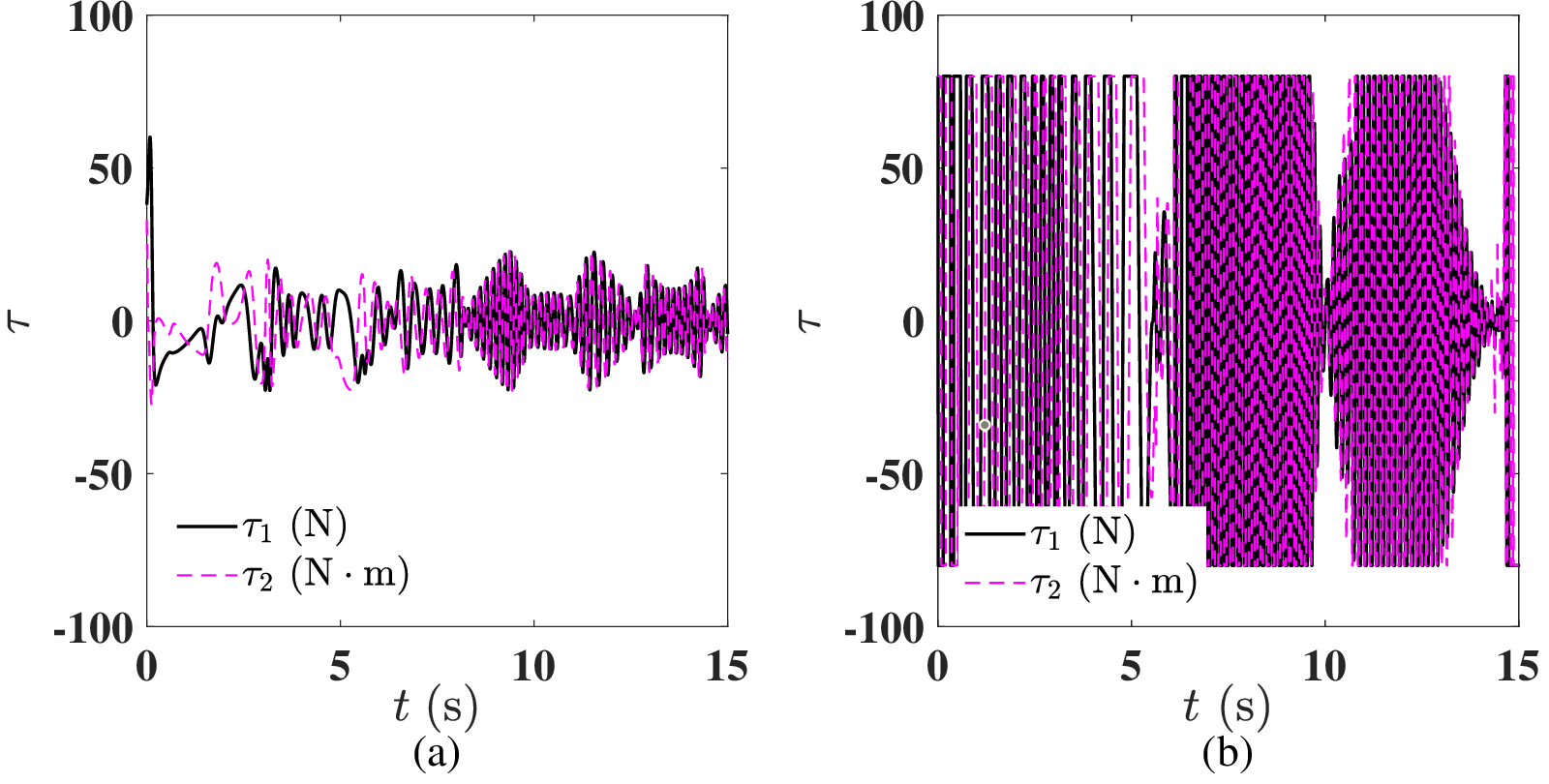}
            \caption{{\color{blue}Control inputs of the uncertain PVTOL aircraft system following $\mathcal{P}_2$ with (a) ARVFCFC and (b) IIOS.}}
            \label{adaptive_compare_tau}
\end{figure}

\begin{figure}[htbp]
  \centering
  \includegraphics[scale=0.3]{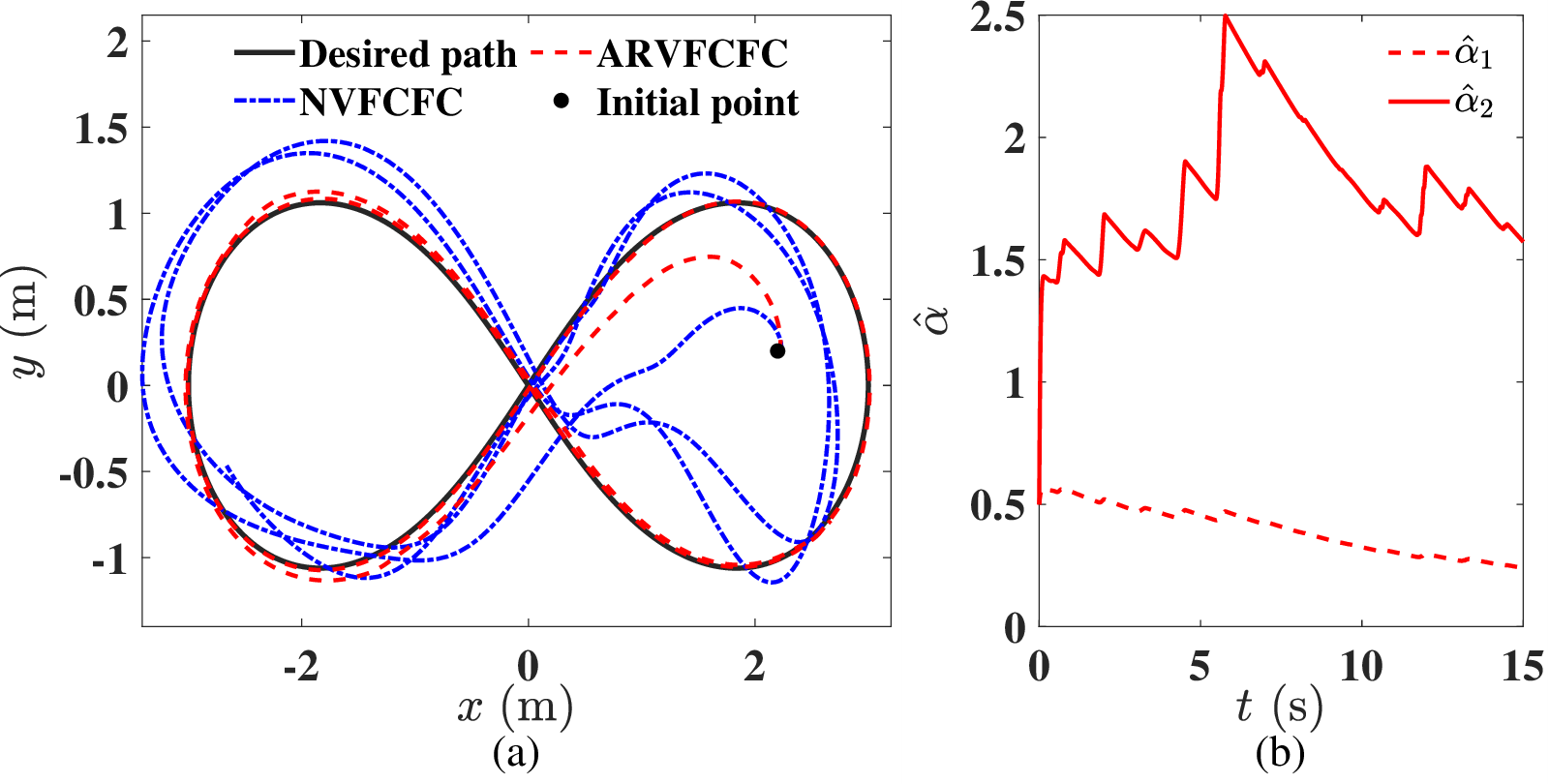}
  \caption{(a) Trajectories of the uncertain PVTOL aircraft system following path $\mathcal{P}_3$, and (b) the corresponding adaptive parameter histories of ARVFCFC.} 
  \label{p123_p12_path}
\end{figure}

\subsection{A fully-actuated 3-link space manipulator system}
The dynamic model of a fully-actuated 3-link space manipulator \cite{shi2016robust} can be put in the
form of \eqref{mechanicalsystem} with
\begin{equation*}
\begin{split}
    &M=\begin{bmatrix}
      m_{12} &0 &0\\
      0 &m_{22} &m_{23}\\
      0 &m_{32} &m_{33}\\
      \end{bmatrix},
    C\dot{q}=\begin{bmatrix}
           C_{1}\\
           C_{2}\\
           C_{3}
      \end{bmatrix},
    g=\begin{bmatrix}
       d_{\theta_1}\\
       g_{2}+d_{\theta_2}\\
       g_{3}+d_{\theta_3}
      \end{bmatrix},\\
     &B=\begin{bmatrix}
      1 &0 &0\\
      0 &1 &0\\
      0 &0 &1\\
      \end{bmatrix},
      \tau=\begin{bmatrix}
          \tau_1\\
          \tau_2\\
          \tau_3
      \end{bmatrix}.
      \end{split}
\end{equation*}
Here,
\begin{equation*}
    \begin{split}    &m_{12}=m_{3}l^{2}_{2}c_{23}^{2}+2m_{3}l_{1}l_{2}c_{2}c_{23}+m_{2}m_{3}l_{1}^{2}c_{2}^{2}+J,\\ &m_{22}=m_{3}l_{2}^{2}+2m_{3}l_{1}l_{2}c_{3}+(m_{2}+m_{3})l_{1}^{2},\\
    &m_{23}=m_{32}=m_{3}l_{2}(l_{1}+l_{2}c_{3}), m_{33}=m_{3}l_{2}^{2},\\ &C_{1}=-2m_{3}l_{2}^{2}\dot{\theta}_{1}\dot{\theta}_{2}s_{23}c_{23}-2m_{3}l_{2}^{2}\dot{\theta}_{1}\dot{\theta}_{3}s_{23}c_{23}\\
    &-2m_{3}l_{1}l_{2}\dot{\theta}_{1}\dot{\theta}_{2}s_{2}c_{23}-2m_{3}l_{1}l_{2}\dot{\theta}_{1}\dot{\theta}_{2}c_{2}s_{23}\\
    &-2m_{3}l_{1}l_{2}\dot{\theta}_{1}\dot{\theta}_{3}c_{2}s_{23}-2(m_{2}+m_{3})l_{1}^2\dot{\theta}_{1}\dot{\theta}_{2}s_{2}c_{2},\\
    &C_{2}=m_{3}l_{2}^2\dot{\theta}_{1}^2s_{23}c_{23}+m_{3}l_{1}l_{2}\dot{\theta}_{2}^2s_{3}+m_{3}l_{1}l_{2}\dot{\theta}_{2}^2c_{2}s_{23}\\
    &-m_{3}l_{1}l_{2}\dot{\theta}_{1}^2c_{23}^2s_{3}-m_{3}l_{1}l_{2}(\dot{\theta}_{2}+\dot{\theta}_{3})^2s_{3}\\
    &+m_{3}l_{1}l_{2}\dot{\theta}_{1}^2s_{23}c_{23}c_{3}+(m_{2}+m_{3})l_{1}^2\dot{\theta}_{1}^2s_{2}c_{2},\\
    &C_{3}=m_{3}l_{2}^2\dot{\theta}_{1}^2s_{23}c_{23}+m_{3}l_{1}l_{2}\dot{\theta}_{2}^2s_{3}+m_{3}l_{1}l_{2}\dot{\theta}_{1}^2c_{2}s_{23},\\
    &g_{2}=m_{3}g_0l_{2}c_{23}+(m_{2}+m_{3})gl_{1}c_{2}, 
    g_{3}=m_{3}g_0l_{2}c_{23},
    \end{split}
\end{equation*}
and $q=[\theta_1,\theta_2,\theta_3]^{\top}$ are the angles of links of the manipulator, $\tau=[\tau_{1},\tau_{2},\tau_{3}]^{\top}$ respectively represent the control torture imposed on the links, $l_1$ is the length of link 2, $l_2$ is the length of link 3, $J$ is the the rotary inertia of link 1, $m_{2}$, $m_{3}$ are the masses of link 2 and 3, respectively, $s_{i}=\sin \theta_{i}$, $c_{i}=\cos \theta_{i}$, $i=2,3$, $s_{23}=\sin(\theta_{2}+\theta_{3})$, $c_{23}=\cos(\theta_{2}+\theta_{3})$, $g_0$ is the gravitational acceleration. Furthermore, $d_{\theta_1}$, $d_{\theta_2}$, and $d_{\theta_3}$ are unknown external disturbances, and $m_i(i=2,3)$ and $J$ are considered to contain uncertainties, which can be decomposed as
$m_i=\bar{m}_i+\Delta m_i$, $J=\bar{J}+\Delta J$.
These uncertainties are all bounded with unknown bounds.

Denote the coordinates of the end effector of the manipulator as $[x,y,z]$, for which we choose a self-intersecting desired path $\mathcal{P}_4$ that is the intersection of two cylinders described by $(x-x^c)^{2}+(z-z^c)^{2}=R_1^{2}$ and $(y-y^c)^{2}+z^{2}=R_2^{2}$. We can find a parametrization for the desired path as follows:
\begin{equation*}
    \begin{split}
        &x=f_{1}(w)=x^c+R_1\sin(2w),\\
        &y=f_{2}(w)=y^c+2\sin w \sqrt{R_1(R_2-R_1\sin ^{2}w)},\\
        &z=f_{3}(w)=z^c+R_1\cos(2w).
    \end{split}
    \label{desiredpath_3l}
\end{equation*}
We choose $R_2 = 1.5, R_1 = 0.5, x^c = 0, y^c = 2.5, z^c=R_2-R_1=1$. {\color{blue}To further evaluate the proposed approach, we consider another desired path $\mathcal{P}_5$ given by a non-self-intersecting torus knot, which is parameterized by
\begin{equation*}
\begin{split}
    &x=f_1(\omega)  = (1 + 0.2\cos(3\omega))\cos(2\omega),\\
    &y=f_2(\omega)  = 2+ (1 + 0.2\cos(3\omega))\sin(2\omega),\\
    &z=f_3(\omega)  = 1+0.2\sin(3\omega).
    \end{split}
\end{equation*}
 In particular, this path corresponds to a $(3,2)$-torus knot.} 
 
 Since the path-following task is directly assigned to the end effector of the manipulator, while the system model is established by the link angles $\theta_1$, $\theta_2$ and $\theta_3$, we need to invoke the analytical expressions that relate the angles $[\theta_1,\theta_2,\theta_3]$ to the end effector coordinates $[x,y,z]$, which are:
\begin{equation}
    \begin{split}
        &\theta_1=\arctan\frac{x}{y},\\
        &\theta_2=\arccos\frac{x^2+y^2+z^2-l_1^2-l_2^2}{2l_1\sqrt{x^2+y^2+z^2}}+\arccos\sqrt{\frac{x^2+y^2}{x^2+y^2+z^2}},\\
        &\theta_3=\arccos\frac{x^2+y^2+z^2-l_1^2-l_2^2}{2l_1l_2}.\\
    \end{split}
    \label{x2theta}
\end{equation}
Consequently, the corresponding VFCs for $\mathcal{P}_4$ and $\mathcal{P}_5$ can be obtained by \eqref{VFC}, with $A=I_{3\times 3}$ for both paths, and the particular expressions of $\chi^{\rm s}$ are omitted here.

Similar to assumptions verification in subsection \ref{underactuated}, Assumptions \ref{feasbility}--\ref{assump_tau2} can be verified straightforwardly. Assumption \ref{assump5} can be verified by $W=\bar{M}M^{-1}$ indicating $\frac{1}{2} \min_{\sigma\in\Sigma} \lambda_m(W+W^{\top})> 0$. Regarding the inequality \eqref{assumption6} in Assumption \ref{assump6}, note that the terms in $M$, $C\dot{q}$, and $g$ are either constant, bounded by a constant, trigonometric in positions, or quadratic in velocities; besides, there are additional terms corresponding to the VFCs that are constant, or trigonometric in $\omega$. Therefore, to meet Assumption \ref{assump6}, we choose 
\begin{equation}
    \Pi(\alpha,q,\dot{q},{\color{red}} t)=\alpha(\|\dot{q}\|^{2}+\|\dot{q}\|+1)=:\alpha \breve{\Pi}(q,\dot{q},t),
\end{equation}
where $\alpha\in \mathbb{R}_+$. 
For simulations, we choose $\bar{J}=0.5\,\rm{kg \cdot m^{2}}$, 
$\bar{m}_{2}=1\,\rm{kg}$, $\bar{m}_{3}=2\,\rm{kg}$ and the initial conditions are $\theta_1(0)=-0.1\,\rm{rad} $, $\theta_2(0)=1.5 \,\rm{rad}$, $\theta_3(0)=1.5\,\rm{rad}$, $\dot{\theta}_1(0)=1\,\rm{rad/s}$, 
$\dot{\theta}_2(0)=0$, $\dot{\theta}_3(0)=0$, $\alpha(0)=0.5$, $\omega(0)=0$. The uncertainties are simulated by $\Delta J=0.1\bar{J}\sin(7.5t)$, $\Delta m_{2}=0.1\bar{m}_{2}\sin(5t)$, $\Delta m_{3}=0.1\bar{m}_{3}\cos(5t)$, $d_{\theta_1}=3\sin 2t$, $d_{\theta_2}=3\cos 4t$, 
$d_{\theta_3}=3\sin 6t$. The VFC and control parameters are chosen as $k_1=k_2=k_3=1$, $P=I$, $\kappa=5$, $\ell_{1}=0.5$, $\ell_{2}=0.3$, $\varepsilon=1$. 

{\color{blue}The simulation results are presented in Figs. \ref{LRA_knot} and \ref{LRA_singularity_free}. 
It can be concluded that ARVFCFC completes both path-following tasks with ultimately bounded path-following errors. In each task, the adaptive parameter enters some positive region after some time.}

\begin{figure}[htbp]
            \centering
            \includegraphics[scale=0.34]{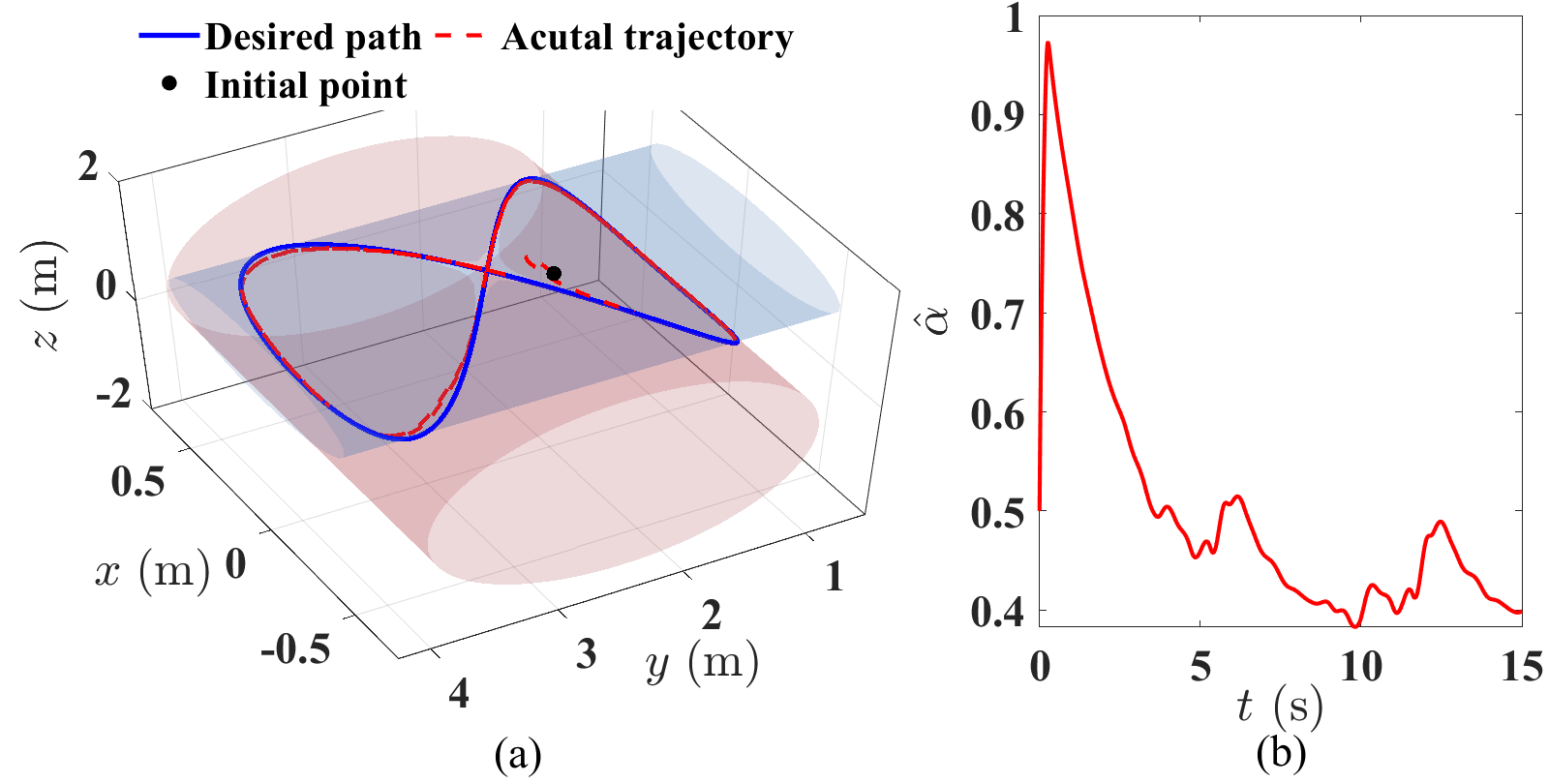}
            \caption{(a) Trajectory and (b) adaptive parameter history of the uncertain space  manipulator system following the self-intersecting path $\mathcal{P}_4$ with ARVFCFC.}
            \label{LRA_knot}
\end{figure}

\begin{figure}[htbp]
            \centering
            \includegraphics[scale=0.34]{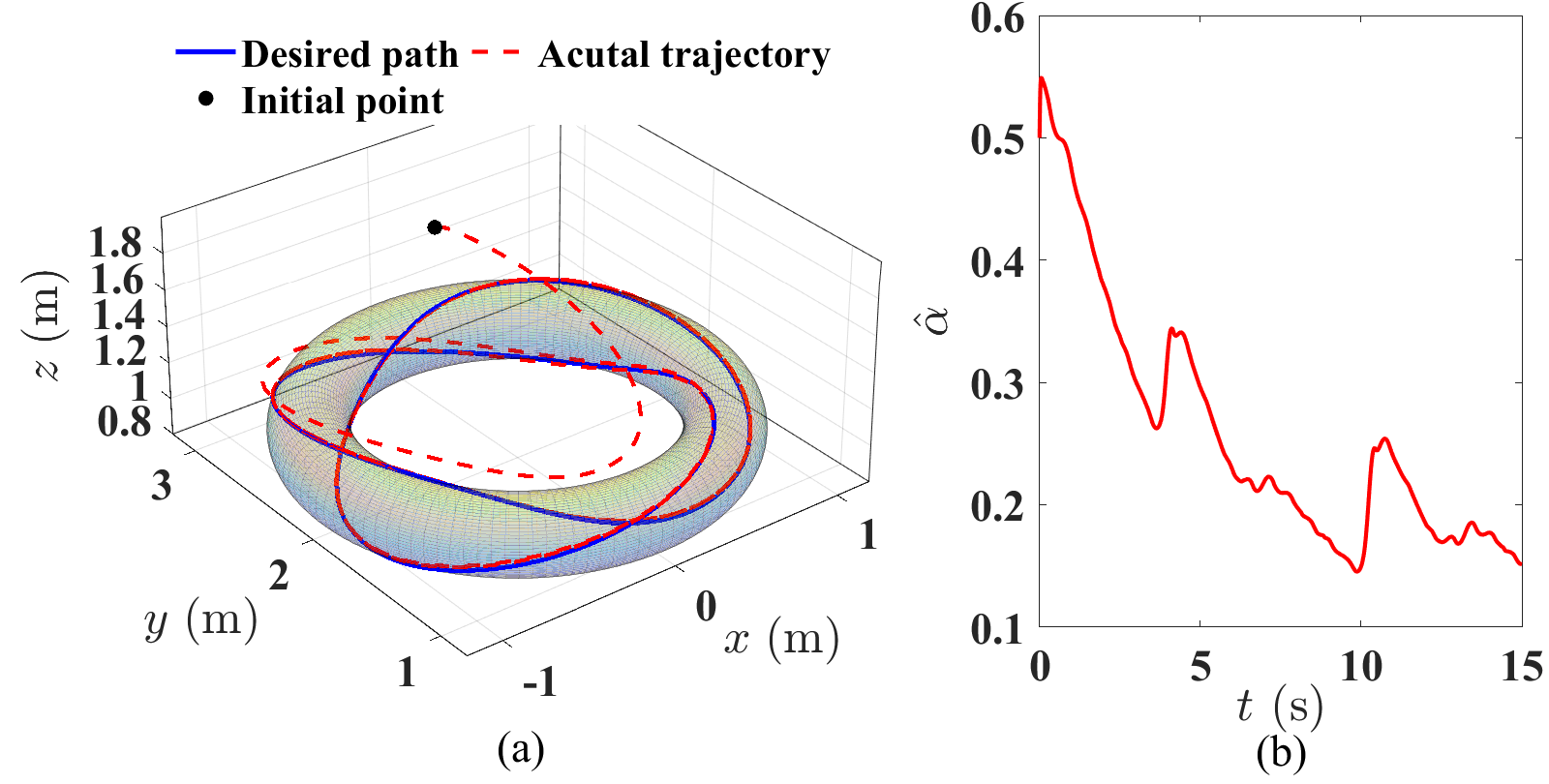}
            \caption{{\color{blue} (a) Trajectory and (b) adaptive parameter history of the uncertain space  manipulator system following the non-self-intersecting path $\mathcal{P}_5$ with ARVFCFC.}}
            \label{LRA_singularity_free}
\end{figure}

\section{Conclusion}  \label{sec_conclusion}
This note integrates GVF and CFC to solve the dynamics control problem of geometric path-following for mechanical systems under heterogeneous time-varying uncertainties--a task impossible with either approach alone. The VFC is meticulously constructed to encode the desired path, bridging GVF and CFC. Under mild assumptions, the main results include: i) path-following error convergence matches that of the VFC-following error; ii) nominal VFCFC ensures vanishing errors without uncertainties; iii) adaptive robust VFCFC ensures ultimately bounded errors with uncertainties. Simulations on a PVTOL aircraft validate the approach.

This work opens a new avenue for path-following control that considers the internal dynamics of uncertain systems. Future research may explore extensions such as collision avoidance, input saturation, and {\color{blue}time-varying desired paths} in swarm mechanical systems,  for enhanced robotic deployment.

\bibliographystyle{IEEEtran}
\bibliography{ref}
\appendix
\subsection*{Proof of Theorem \ref{thm2}}
	Let 
	\begin{equation*}
  \setlength{\abovedisplayskip}{3pt}
\setlength{\belowdisplayskip}{3pt}
		V(\beta,\hat{\alpha}-\alpha)=\beta^{\top}P\beta+\ell_1^{-1}(1+\rho_{W})(\hat{\alpha}-\alpha)^{\top}(\hat{\alpha}-\alpha)
	\end{equation*}
	be the Lyapunov function candidate. The time derivative of $V$ is given by
	\begin{equation}\label{DeriLya}
  \setlength{\abovedisplayskip}{3pt}
\setlength{\belowdisplayskip}{3pt}
		\dot{V}=2\beta^{\top}P\dot{\beta}+2\ell_1^{-1}(1+\rho_{W})(\hat{\alpha}-\alpha)^{\top}\dot{\hat{\alpha}}.
	\end{equation}
	In the proof, the arguments of functions are mostly omitted when no confusion is likely to arise, except for a few critical cases. Now we deal with \eqref{DeriLya} by analyzing each term separately. First,
	\begin{equation}\label{proof_beta_0}
  \setlength{\abovedisplayskip}{3pt}
\setlength{\belowdisplayskip}{3pt}
		\begin{split}
			2\beta^{\top}P\dot{\beta}&=2\beta^{\top}P(A\ddot{q}-b)\\
			&=2\beta^{\top}P\{A[M^{-1}(B\tau-C\dot{q}-g)]-b\}.
		\end{split}
	\end{equation}
	By the decomposition for nominal and uncertain portions of system matrices and $M^{-1}=\bar{M}^{-1}+(M^{-1}-\bar{M}^{-1})$, we can obtain
	\begin{equation}\label{proof_beta_1}
  \setlength{\abovedisplayskip}{3pt}
\setlength{\belowdisplayskip}{3pt}
		\begin{split}
			&\beta^{\top}P\{A[M^{-1}(B\tau-C\dot{q}-g)]-b\}\\
			=&\beta^{\top}P\{A[\bar{M}^{-1}(-\bar{C}\dot{q}-\bar{g})+\bar{M}^{-1}B(p_{1}+p_{2})+\bar{M}^{-1}(-\Delta C\dot{q}\\
			&-\Delta g)+(M^{-1}-\bar{M}^{-1})(-C\dot{q}-g+Bp_{1}+Bp_{2})\\
			&+M^{-1}Bp_{3}]-b\}.
		\end{split}
	\end{equation}
	By $B=\bar{B}+\Delta B$, $\Delta B= \bar{B}\check{B}+\tilde{B}$ and $p_1$ in \eqref{nom_p1} we have
	\begin{equation}
  \setlength{\abovedisplayskip}{3pt}
\setlength{\belowdisplayskip}{3pt}
		\begin{split}
			A&[\bar{M}^{-1}(-\bar{C}\dot{q}-\bar{g})+\bar{M}^{-1}Bp_{1}]-b\\
			=&A[\bar{M}^{-1}(-\bar{C}\dot{q}-\bar{g})+\bar{M}^{-1}(\bar{{B}}+\bar{B}\check{B}+\tilde{B})p_{1}]-b\\
			=&A[\bar{M}^{-1}(-\bar{C}\dot{q}-\bar{g})+\bar{M}^{-1}\bar{B}p_{1}]-b+A\bar{M}^{-1}\bar{B}\check{B}p_{1}\\
   &+A\bar{M}^{-1}\tilde{B}p_{1}\\
			=&A\bar{M}^{-1}\bar{B}\check{B}p_{1}.
		\end{split}
	\end{equation}
	Then we look at the terms on the RHS of (\ref{proof_beta_1})
	\begin{equation}\label{proof_beta_2}
		\begin{split}
			A&[\bar{M}^{-1}(-\bar{C}\dot{q}-\bar{g})+\bar{M}^{-1}B(p_{1}+p_{2})+\bar{M}^{-1}(-\Delta C\dot{q}-\Delta g)\\
			&+(M^{-1}-\bar{M}^{-1})(-C\dot{q}-g+Bp_{1}+Bp_{2})]-b\\
			=&A\bar{M}^{-1}\bar{B}\check{B}p_{1}+A[\bar{M}^{-1}Bp_{2}+\bar{M}^{-1}(-\Delta C\dot{q}-\Delta g)\\
			&+(M^{-1}-\bar{M}^{-1})(-C\dot{q}-g+Bp_{1}+Bp_{2})]\\
			=&A\{\bar{M}^{-1}[\bar{B}\check{B}p_{1}-(\bar{B}\check{C}+\tilde{C})\dot{q}-(\bar{B}\check{g}+\tilde{g})]\\
			&+\bar{M}^{-1}(\bar{B}\tilde{H}+\tilde{H})(-C\dot{q}-g+Bp_{1}+Bp_{2})+\bar{M}^{-1}Bp_{2}\}\\
			=&A\bar{M}^{-1}(\bar{B}\check{B}p_{1}-\bar{B}\check{C}\dot{q}-\bar{B}\check{g}+Bp_{2})-A\bar{M}^{-1}(\tilde{C}\dot{q}\\
			&+\tilde{g})+A\bar{M}^{-1}\bar{B}\tilde{H}(-C\dot{q}-G+Bp_{1}+Bp_{2})\\
			&+A\bar{M}^{-1}\tilde{H}(-C\dot{q}-g+Bp_{1}+Bp_{2})\\
			=&A\bar{M}^{-1}\bar{B}[\check{H}(-C\dot{q}-g+Bp_{1})+(\check{B}p_{1}+-\check{C}\dot{q}\\
			&-\check{g})]+A\bar{M}^{-1}\bar{B}\check{H}Bp_{2}+A\bar{M}^{-1}Bp_{2}.
		\end{split}
	\end{equation}
	The last two terms on the RHS of \eqref{proof_beta_2} can be rewritten as:
	\begin{equation}\label{proof_beta_3}
  \setlength{\abovedisplayskip}{2pt}
\setlength{\belowdisplayskip}{2pt}
		\begin{split}
		A&\bar{M}^{-1}\bar{B}\check{H}Bp_{2}+A\bar{M}^{-1}Bp_{2}\\
			=&A\bar{M}^{-1}(H-\tilde{H})(\bar{B}+\Delta B)p_{2}+A\bar{M}^{-1}(\bar{B}+\Delta B)p_{2}\\
			=&A\{M^{-1}+\bar{M}^{-1}[H-\bar{B}(A\bar{M}^{-1}\bar{B})^{-1}A\bar{M}^{-1}H]\}(\bar{B}\\
   &+\Delta B)p_{2}\\
			=&AM^{-1}(\bar{B}+\Delta B)p_{2}.
		\end{split}
	\end{equation}
	By (\ref{proof_beta_0})-(\ref{proof_beta_3}), we can obtain
	\begin{equation}\label{proof_beta_9}
  \setlength{\abovedisplayskip}{2pt}
\setlength{\belowdisplayskip}{2pt}
		\begin{split}
			2&\beta^{\top}P\dot{\beta}=2\beta^{\top}PA\bar{M}^{-1}\bar{B}[\check{H}(-C\dot{q}-g+Bp_{1})+(\check{B}p_{1}\\
			&-\check{C}\dot{q}-\check{g})]+2\beta^{\top}PAM^{-1}(\bar{B}+\Delta B)p_{2}\\
			&+2\beta^{\top}PAM^{-1}(\bar{B}+\Delta B)p_{3}.\\
		\end{split}
	\end{equation}
	Considering the first two terms on the RHS of \eqref{proof_beta_9}, by Assumption \ref{assump6}, we have
	\begin{equation*}
  \setlength{\abovedisplayskip}{2pt}
\setlength{\belowdisplayskip}{2pt}
		\begin{split}
			2&\beta^{\top}PA\bar{M}^{-1}\bar{B}[\check{H}(-C\dot{q}-g+Bp_{1})+(\check{B}p_{1}\\
			&-\check{C}\dot{q}-\check{g})]\le 2\|\breve{\beta}\|(1+\rho_{W})\Pi(\alpha,q,\dot{q},w,t).
		\end{split}
	\end{equation*}
	Considering the last two terms on the RHS of \eqref{proof_beta_9}, we have
	\begin{equation*}
  \setlength{\abovedisplayskip}{0pt}
\setlength{\belowdisplayskip}{0pt}
		\begin{split}
			2&\beta^{\top}PAM^{-1}(\bar{B}+\Delta B)p_{2}+2\beta^{\top}PAM^{-1}(\bar{B}+\Delta B)p_{3}\\
			=&2\beta^{\top}PAM^{-1}(\bar{B}+\Delta B)(p_{2}+p_{3}),
		\end{split}
	\end{equation*}
	where
	\begin{equation*}
  \setlength{\abovedisplayskip}{2pt}
\setlength{\belowdisplayskip}{2pt}
		\begin{split}
			2&\beta^{\top}PAM^{-1}(\bar{B}+\Delta B)\\
			=&2\beta^{\top}PA\bar{M}^{-1}[I+(\bar{B}\check{H}+\tilde{H})](\bar{B}+\bar{B}\check{B}+\tilde{B})\\
			=&2\beta^{\top}P(A\bar{M}^{-1}\bar{B}+A\bar{M}^{-1}\bar{B}\check{B})+2\beta^{\top}PA\bar{M}^{-1}\bar{B}\check{H}(\bar{B}\\
   &+\bar{B}\check{B}+\tilde{B})\\
			=&2\beta^{\top}PA\bar{M}^{-1}\bar{B}[I+\check{B}+\check{H}(\bar{B}+\bar{B}\check{B}+\tilde{B})]\\
			=&2\beta^{\top}PA\bar{M}^{-1}\bar{B}(I+\check{B}+\check{H}B).
		\end{split}
	\end{equation*}
	By $p_2$ in \eqref{nominal}, we have
	\begin{equation*}
  \setlength{\abovedisplayskip}{0pt}
\setlength{\belowdisplayskip}{0pt}
		\begin{split}
			2&\beta^{\top}PAM^{-1}(\bar{B}+\Delta B)p_{2}\\
			=&2\beta^{\top}PA\bar{M}^{-1}\bar{B}(I+\check{B}+\check{H}B)(-\kappa (A\bar{M}^{-1} \bar{B})^{\top} P\beta)\\
			=&2\beta^{\top}PA\bar{M}^{-1}\bar{B}(-\kappa \bar{B}^{\top}\bar{M}^{-1} A^{\top} P \beta)+2\beta^{\top}PA\bar{M}^{-1}\bar{B}\\
			&\times (\check{B}+\check{H}B)(-\kappa \bar{B}^{\top}\bar{M}^{-1} A^{\top} P \beta).
		\end{split}
	\end{equation*}
	Applying Assumptions \ref{assump_tau2} and \ref{assump5} yields
	\begin{equation*}
  \setlength{\abovedisplayskip}{2pt}
\setlength{\belowdisplayskip}{2pt}
		\begin{split}
			&2\beta^{\top}PAM^{-1}(\bar{B}+\Delta B)p_{2}=-2\kappa\beta^{\top}(PA\bar{M}^{-1}\bar{B}\bar{B}^{\top}\\
			&\times\bar{M}^{-1} A^{\top} P )\beta-2\beta^{\top}PA\bar{M}^{-1}\bar{B}\frac{1}{2}(W+W^{\top})\\
   &\times(\kappa B^{\top}\bar{M}^{-1} A^{\top} P \beta)\\
			&\le  -2\kappa(1+\rho_{W})\underline{\lambda}\|\beta\|^{2}.
		\end{split}
	\end{equation*}
	Similarly, by $p_3$ in (\ref{CFC_p3}) we have
	\begin{equation}\label{proof_beta_8}
  \setlength{\abovedisplayskip}{0pt}
\setlength{\belowdisplayskip}{0pt}
		\begin{split}
			2&\beta^{\top}PAM^{-1}(\bar{B}+\Delta B)p_{3}\\
			=&2\beta^{\top}PA\bar{M}^{-1}\bar{B}(-\eta\upsilon\Pi(\alpha,q,\dot{q},w,t))+2\beta^{\top}PA\bar{M}^{-1}\\
   &\times\bar{B}(\check{B}+\check{H}B)(-\eta\upsilon\Pi(\hat{\alpha},q,\dot{q},w,t))\\
			=&2\beta^{\top}PA\bar{M}^{-1}\bar{B}(-\eta\bar{B}^{\top}\bar{M}^{-1}A^{\top}P\beta\Pi^{2}(\hat{\alpha},q,\dot{q},w,))\\
			&+2\beta^{\top}PA\bar{M}^{-1}\bar{B}(\check{B}+\check{H}B)(-\eta\bar{B}^{\top}\bar{M}^{-1}A^{\top}P\beta\\
   &\times\Pi^{2}(\hat{\alpha},q,\dot{q},w,t)).
		\end{split}
	\end{equation}
	By the expression of $\upsilon$ in \eqref{CFC_p3}, the first term on the RHS of \eqref{proof_beta_8} is 
	\begin{equation}\label{proof_beta_8-1}
  \setlength{\abovedisplayskip}{0pt}
\setlength{\belowdisplayskip}{0pt}
		\begin{split}
			2&\beta^{\top}PA\bar{M}^{-1}\bar{B}(-\eta\bar{B}^{\top}\bar{M}^{-1}A^{\top}P\beta\Pi^{2}(\hat{\alpha},q,\dot{q},w,t))\\
			=&-2\eta\Pi(\hat{\alpha},q,\dot{q},w,t)\beta^{\top}PA\bar{M}^{-1}\bar{B}\bar{B}^{\top}\bar{M}^{-1}A^{\top}P\beta\\
   &\times\Pi(\hat{\alpha},q,\dot{q},w,t)\\
			=&-2\eta\|\upsilon(\hat{\alpha},q,\dot{q},w,t)\|^{2}.
		\end{split}
	\end{equation}
	Based on Assumption \ref{assump5}, the second term on the RHS of \eqref{proof_beta_8} is 
	\begin{equation}\label{proof_beta_8-2}
  \setlength{\abovedisplayskip}{0pt}
\setlength{\belowdisplayskip}{0pt}
		\begin{split}
			2&\beta^{\top}PA\bar{M}^{-1}\bar{B}(\check{B}+\check{H}B)(-\eta\bar{B}^{\top}\bar{M}^{-1}A^{\top}P\beta\\
   &\times\Pi^{2}(\hat{\alpha},q,\dot{q},w,t))\\
			=&-2\eta(\Pi(\hat{\alpha},q,\dot{q},w,t)\beta^{\top}PA\bar{M}^{-1}\bar{B}\frac{1}{2}(W+W^{\top})(\bar{B}^{\top}\\
   &\times\bar{M}^{-1}A^{\top}P\beta\Pi(\hat{\alpha},q,\dot{q},w,t))\\
   &\le -2\eta\rho_{W}\|\upsilon(\hat{\alpha},q,\dot{q},w,t)\|^{2}.
		\end{split}
	\end{equation}

 Combining \eqref{proof_beta_8}-\eqref{proof_beta_8-2} we can obtain
	\begin{equation*}
  \setlength{\abovedisplayskip}{0pt}
\setlength{\belowdisplayskip}{0pt}
		2\beta^{\top}PAM^{-1}(\bar{B}+\Delta B)p_{3}\le -2\eta(1+\rho_{W})\|\upsilon(\hat{\alpha},q,\dot{q},w,t)\|^{2}.
	\end{equation*}

	Based on the expression of $\eta$ in \eqref{CFC_p3}, if $\|\upsilon\|>\mu$, $-2\eta(1+\rho_{W})\|\upsilon\|^{2}=-2(1+\rho_{W})\|\upsilon\|$ and 
	\begin{equation}\label{proof_beta_4}
  \setlength{\abovedisplayskip}{0pt}
\setlength{\belowdisplayskip}{0pt}
		\begin{split}
			2&\beta^{\top}P\dot{\beta}\le 2\|\breve{\beta}\|(1+\rho_{W})\Pi(\alpha,q,\dot{q},w,t)\\
    &-2\kappa(1+\rho_{W})\underline{\lambda}\|\beta\|^{2}-2(1+\rho_{W})\|\upsilon(\hat{\alpha},q,\dot{q},w,t)\|\\
    =&2\|\breve{\beta}\|(1+\rho_{W})\Pi(\alpha,q,\dot{q},w,t)-2\kappa(1+\rho_{W})\underline{\lambda}\|\beta\|^{2}\\
    &-2\|\breve{\beta}\|(1+\rho_{W})\Pi(\hat{\alpha},q,\dot{q},w,t)\\
			=&-2\kappa(1+\rho_{W})\underline{\lambda}\|\beta\|^{2}-2\|\breve{\beta}\|(1+\rho_{W})(\hat{\alpha}-\alpha)^{\top}\breve{\Pi}(q,\dot{q},w,t).
		\end{split}
	\end{equation}
	And if $\|\upsilon\|\le \mu$, $-2\eta(1+\rho_{W})\|\upsilon\|^{2}=-2(1+\rho_{W})\|\upsilon\|^{2}/\mu$ and
	\begin{equation}\label{proof_beta_5}
		\begin{split}
			&2\beta^{\top}P\dot{\beta}\\
            &\le 2\|\breve{\beta}\|(1+\rho_{W})\Pi(\alpha,q,\dot{q},w,t)-2\kappa(1+\rho_{W})\underline{\lambda}\|\beta\|^{2}\\
		    &-2(1+\rho_{W})\frac{\|\upsilon(\hat{\alpha},q,\dot{q},w,t)\|^{2}}{\mu}\\
			&=2\|\breve{\beta}\|(1+\rho_{W})\Pi(\alpha,q,\dot{q},w,t)\\
			&-2\kappa(1+\rho_{W})\underline{\lambda}\|\beta\|^{2}-2(1+\rho_{W}) \frac{\|\upsilon(\hat{\alpha},q,\dot{q},w,t)\|^{2}}{\mu}\\
			&+2\|\breve{\beta}\|(1+\rho_{W})\Pi(\hat{\alpha},q,\dot{q},w,t)\\
   &-2\|\breve{\beta}\|(1+\rho_{W})\Pi(\hat{\alpha},q,\dot{q},w,t)\\
			&= 2(1+\rho_{W})\|\upsilon(\hat{\alpha},q,\dot{q},w,t)\|(1-\frac{\|\upsilon(\hat{\alpha},q,\dot{q},w,t)\|}{\mu})\\
   &-2\kappa(1+\rho_{W})\underline{\lambda}\|\beta\|^{2}-2\|\breve{\beta}\|(1+\rho_{W})(\hat{\alpha}-\alpha)^{\top}\breve{\Pi}(q,\dot{q},w,t)\\
   &\le \frac{1}{2}(1+\rho_{W})\mu-2\kappa(1+\rho_{W})\underline{\lambda}\|\beta\|^{2}\\
   &-2\|\breve{\beta}\|(1+\rho_{W})(\hat{\alpha}-\alpha)^{\top}\breve{\Pi}(q,\dot{q},w,t).
		\end{split}
	\end{equation}
	Therefore, using \eqref{proof_beta_4} and \eqref{proof_beta_5}, there is 
	\begin{equation}\label{proof_beta_6}
		\begin{split}
			2&\beta^\top P\dot{\beta}\le \frac{1}{2}(1+\rho_{W})\mu-2\kappa(1+\rho_{W})\underline{\lambda}\|\beta\|^{2}\\
			&-2\|\breve{\beta}\|(1+\rho_{W})(\hat{\alpha}-\alpha)^{\top}\breve{\Pi}(q,\dot{q},w,t).
		\end{split}
	\end{equation}
	for all $\upsilon$.
	Subsequently, we will investigate the second term of \eqref{proof_beta_0}. Based on the adaptive law \eqref{CFC_zi2}, if $\|\breve{\Pi}\|\|\breve{\beta}\|>\varepsilon$, we have
	\begin{equation}\label{proof_adaptive_1}
		\begin{split}
			&2(1+\rho_{W})(\hat{\alpha}-\alpha)^{\top}\ell_1^{-1}\dot{\hat{\alpha}}\\
			&=2(1+\rho_{W})(\hat{\alpha}-\alpha)^{\top}\breve{\Pi}(q,\dot{q},w,t)\|\breve{\beta}\|\\
   &-2(1+\rho_{W})(\hat{\alpha}-\alpha)^{\top}\ell_1^{-1}\ell_2\hat{\alpha}\\
			&\le 2(1+\rho_{W})(\hat{\alpha}-\alpha)^{\top}\breve{\Pi}(q,\dot{q},t)\|\breve{\beta}\|\\
   &-2(1+\rho_{W})\|\hat{\alpha}-\alpha\|^{2} \ell_1^{-1}\ell_2+2(1+\rho_{W})\|\hat{\alpha}-\alpha\| \|\alpha \|\ell_1^{-1}\ell_2;
		\end{split}
	\end{equation}
 if $\|\breve{\Pi}\|\|\breve{\beta}\|\le\varepsilon$, we have
    \begin{equation}\label{proof_adaptive_2}
		\begin{split}
			&2(1+\rho_{W})(\hat{\alpha}-\alpha)^{\top}\ell_1^{-1}\dot{\hat{\alpha}}\\
			&=2(1+\rho_{W})(\hat{\alpha}-\alpha)^{\top}\breve{\Pi}(q,\dot{q},w,t)\frac{\|\breve{\Pi}(q,\dot{q},w,t)\|\|\breve{\beta}\|^2}{\varepsilon}\\
   &-2(1+\rho_{W})(\hat{\alpha}-\alpha)^{\top}\ell_1^{-1}\ell_2\hat{\alpha}\\
			&\le 2(1+\rho_{W})(\hat{\alpha}-\alpha)^{\top}\breve{\Pi}(q,\dot{q},t)\|\breve{\beta}\|\\
   &-2(1+\rho_{W})\|\hat{\alpha}-\alpha\|^{2} \ell_1^{-1}\ell_2+2(1+\rho_{W})\|\hat{\alpha}-\alpha\| \|\alpha \|\ell_1^{-1}\ell_2.
		\end{split}
	\end{equation}
	Combining \eqref{proof_beta_0}, \eqref{proof_beta_6} and \eqref{proof_adaptive_1}, we can obtain
	\begin{equation}\label{proof_beta_10}
		\begin{split}
			\dot{V}&\le \frac{1}{2}(1+\rho_{W})\mu-2\kappa(1+\rho_{W})\underline{\lambda}\|\beta\|^{2}-2(1+\rho_{W})\\
			&\times\|\hat{\alpha}-\alpha\|^{2}\ell_1^{-1}\ell_2+2(1+\rho_{W})\|\hat{\alpha}-\alpha\| \|\alpha \|\ell_1^{-1}\ell_2.
		\end{split}
	\end{equation}
    Combining \eqref{proof_beta_0}, \eqref{proof_beta_6} and \eqref{proof_adaptive_2}, we can obtain a result similar to \eqref{proof_beta_10} . Let $\varphi(t):=[\beta^{\top}(q(t),\dot{q}(t),w(t))\quad(\hat{\alpha}(t)-\alpha)^{\top} ]^{\top}$, then we have 
	\begin{equation*}
		\dot{V}\le -K_{1}\|\varphi\|^{2}+K_{2}\|\varphi\|+K_{3},
	\end{equation*}
	where $K_{1}=\min\{2\kappa \underline{\lambda}(1+\rho_{W}), 2(1+\rho_{W})\ell_1^{-1}\ell_2\}$, $K_{2}=2(1+\rho_{W})\ell_1^{-1}\ell_2\|\alpha\|$, $K_{3}=\frac{1}{2}(1+\rho_{W})$.
	Based on the standard arguments on UB and UUB in \cite{corless1981continuous}, it can be concluded that $\|\varphi\|$ is uniformly bounded by
	\begin{equation*}
		d(s)=\begin{cases}
			\sqrt{\frac{\mathcal{X}_{2}}{\mathcal{X}_{1}}}R, & \mbox{if   }s \le R,\\
			\sqrt{\frac{\mathcal{X}_{2}}{\mathcal{X}_{1}}}s, & \mbox{if   }s > R,
		\end{cases}
	\end{equation*}
	\begin{equation*}
		R=\frac{K_{2}+\sqrt{K_{2}^{2}+4K_{1}K{3}}}{2K_{1}},
	\end{equation*}
	where $\mathcal{X}_{1}=\min\{\lambda_{\min}(P),2\ell_{1}^{-1}(1+\rho_{W})\}$, $\mathcal{X}_{2}=\max\{\lambda_{\max}(P),2\ell_{1}^{-1}(1+\rho_{W})\}$. Furthermore, $\|\varphi\|$ is uniformly ultimately bounded by
	\begin{equation*}
		\bar{d}=\sqrt{\frac{\mathcal{X}_{2}}{\mathcal{X}_{1}}}R,
	\end{equation*}
 with the time required for $\|\varphi\|$ to converge to the ultimate bound being
	\begin{equation*}
		T(\bar{d},s)=\begin{cases}
			0, & \mbox{if   }s \le \bar{d}\sqrt{\frac{\mathcal{X}_{2}}{\mathcal{X}_{1}}}\\
			\frac{\mathcal{X}_{2}s^{2}-(\mathcal{X}_{1}^{2}/\mathcal{X}_{2})\bar{d}^{2}}{K_{1}\bar{d}^{2}(\mathcal{X}_{1}/\mathcal{X}_{2})-K_{2}\bar{d}(\mathcal{X}_{1}/\mathcal{X}_{2})^{1/2}-K_{3}}, & \mbox{otherwise}.
		\end{cases}
	\end{equation*}
 
 As a result, both the VFC-following error $\|\beta(q(t),\dot{q}(t),w(t))\|$ and the adaptive error $\|\hat{\alpha}(t)-\alpha\|$ are uniformly bounded and uniformly ultimately bounded (i.e., the trajectory of the uncertain mechanical system \eqref{mechanicalsystem} approximately follows the constraint defined by \eqref{VFC}). By Proposition \ref{coro2}, we have that the path-following error dist$(\xi(t),\mathcal{P}^{hgh})$ is uniformly ultimately bounded. Consequently, by Remark \ref{markphgh}, the path-following error dist$(\zeta(t),\mathcal{P})$ is uniformly ultimately bounded. \hfill$\square$

\end{document}